\newcolumntype{N}{>{\scriptsize\centering\arraybackslash}c} 
\newtheorem{prop}{Proposition}
\newtheorem{cor}{Corollary}
\newtheorem{rem}{Remark}
\newtheorem{conjecture}{Conjecture}
\pgfplotsset{compat=newest} 
\newtheorem{fact}{Fact}
\crefname{enumi}{}{\unskip}
\begin{document}

\title{Solving the Heilbronn Triangle Problem using Global Optimization Methods}

\author[1]{\fnm{Amirali} \sur{Modir} 
}\email{amirali.mgh@sharif.edu}
\equalcont{These authors contributed equally to this work.}

\author[2]{\fnm{Amirhossein} \sur{Monji} 
}\email{amirhossein.monji@sharif.edu}
\equalcont{These authors contributed equally to this work.}


\author[3]{\fnm{Burak} \sur{Kocuk} 
}\email{burakkocuk@sabanciuniv.edu}

\title{
Solving the Heilbronn Triangle Problem using Global Optimization Methods
}

\affil[1,2]{\orgdiv{Industrial Engineering Department}, \orgname{Sharif University of Technology}, \orgaddress{ \postcode{1458889694}, \city{Tehran}, \country{Iran}}}

\affil[3]{\orgdiv{Industrial Engineering Program}, \orgname{Sabancı University}, \orgaddress{ \postcode{34956}, \city{Istanbul}, \country{Turkey}}}


\abstract{We study the Heilbronn triangle problem, which involves placing \(n\) points in the unit square such that the minimum area of any triangle formed by these points is maximized. A straightforward maximin formulation of this problem is highly non-linear and non-convex due to the existence of bilinear terms and absolute value equations. We propose two mixed-integer quadratically constrained programming (MIQCP) and one QCP formulation, which can be readily solved by any global optimization solver. We develop several formulation enhancements in the form of bound tightening and symmetry breaking inequalities that are prevalent in the global optimization literature in addition to other enhancements that exploit the problem structure. With the help of these enhancements, our models reproduce proven optimal values for instances up to $n=8$ points with certified optimality in the order of seconds. In the case of $n=9$ points, for which no analytical proof is known, we establish a certified optimal value by a computational effort of one day. This is a significant improvement over the previous benchmark established in 31 days of computations by Chen et al. (2017).}

\keywords{Heilbronn triangle problem; global optimization; mixed-integer quadratically constrained programming; mixed-integer linear programming;  discretization}



\maketitle

\section{Introduction}

The Heilbronn triangle problem involves placing \(n\) points ($n>2$) in the unit square $[0,1] \times [0,1]$ such that the minimum area of any triangle formed by these points is maximized. More precisely, denoting the coordinates of point $i$ as   $(x_i,y_i) \in [0,1] \times [0,1]$  for $i=1,\dots,n$, our aim is to solve the following optimization problem:
\begin{equation}\label{eq:heilbronnGeneric}
H_n^* := \max_{ (x,y)\in[0,1]^n\times[0,1]^n} \ \min_{1\le i<j<k \le n} \left\{  \frac12|    x_i(y_j-y_k) - x_j(y_i-y_k)+x_k(y_i-y_j)     | \right\}.
\end{equation}
Notice that the expression $\frac12|    x_i(y_j-y_k) - x_j(y_i-y_k)+x_k(y_i-y_j)     |$ gives the area of the triangle formed by points $i$, $j$ and $k$. We remark that problem~\eqref{eq:heilbronnGeneric} is highly non-convex due to the existence of the bilinear terms as well as the absolute value function.
For small $n$ values such as 3, 4 or 5, the optimal configurations are trivial to obtain (see Figure~\ref{fig:n3_to_n5_placements} for an illustration). However, for slightly larger values of $n$, the Heilbronn triangle problem becomes notoriously difficult to solve. In this paper, our aim is to solve this problem up to $n=9$ points using global optimization methods.
\begin{figure}[h!]
\centering
\begin{tikzpicture}[scale=4, thick]

  \begin{scope}[shift={(0,0)}]
    \node[align=center] at (0.5,1.10)
  {{$n=3$}\\[-0.2ex]{\scriptsize proven optimal value: 0.5}};

    \draw (0,0) rectangle (1,1);

    \coordinate (A) at (0,0);
    \coordinate (B) at (1,0);
    \coordinate (C) at (0,1);

    \draw[black!65] (A)--(B)--(C)--cycle;

    \fill[blue!25, opacity=.35] (A)--(B)--(C)--cycle;
    \draw[blue!60!black]        (A)--(B)--(C)--cycle;

    \foreach \P in {A,B,C}{\fill (\P) circle (0.015);}
  \end{scope}

  \begin{scope}[shift={(1.35,0)}]
        \node[align=center] at (0.5,1.10)
  {{$n=4$}\\[-0.2ex]{\scriptsize proven optimal value: 0.5}};

    \draw (0,0) rectangle (1,1);

    \coordinate (A) at (0,0);
    \coordinate (B) at (1,0);
    \coordinate (C) at (0,1);
    \coordinate (D) at (1,1);

    \draw[black!65] (A)--(B)--(C)--cycle; 
    \draw[black!65] (A)--(B)--(D)--cycle; 
    \draw[black!65] (A)--(C)--(D)--cycle; 
    \draw[black!65] (B)--(C)--(D)--cycle; 

    \fill[green!25, opacity=.35] (A)--(B)--(C)--cycle;
    \draw[green!55!black]         (A)--(B)--(C)--cycle;

    \foreach \P in {A,B,C,D}{\fill (\P) circle (0.015);}
  \end{scope}

  \begin{scope}[shift={(2.70,0)}]
        \node[align=center] at (0.5,1.10)
  {{$n=5$}\\[-0.2ex]{\scriptsize proven optimal value: 0.19245}};

    \draw (0,0) rectangle (1,1);

    \coordinate (P1) at (0.3339240927352503, 0.0);
    \coordinate (P2) at (1.0, 0.0);
    \coordinate (P3) at (0.0, 0.5778621196377186);
    \coordinate (P4) at (1.0, 0.6660759072647495);
    \coordinate (P5) at (0.4221376453891849, 1.0);

    \draw[black!65] (P1)--(P2)--(P3)--cycle;
    \draw[black!65] (P1)--(P2)--(P4)--cycle;
    \draw[black!65] (P1)--(P2)--(P5)--cycle;
    \draw[black!65] (P1)--(P3)--(P4)--cycle;
    \draw[black!65] (P1)--(P3)--(P5)--cycle; 
    \draw[black!65] (P1)--(P4)--(P5)--cycle;
    \draw[black!65] (P2)--(P3)--(P4)--cycle;
    \draw[black!65] (P2)--(P3)--(P5)--cycle;
    \draw[black!65] (P2)--(P4)--(P5)--cycle;
    \draw[black!65] (P3)--(P4)--(P5)--cycle;

    \fill[red!25, opacity=.35] (P1)--(P3)--(P5)--cycle;
    \draw[red!70!black]         (P1)--(P3)--(P5)--cycle;

    \foreach \P in {P1,P2,P3,P4,P5}{\fill (\P) circle (0.015);}
  \end{scope}

\end{tikzpicture}
\caption{Optimal point placements for $n=3$ to $n=5$ for the Heilbronn triangle problem.}
\label{fig:n3_to_n5_placements}
\end{figure}
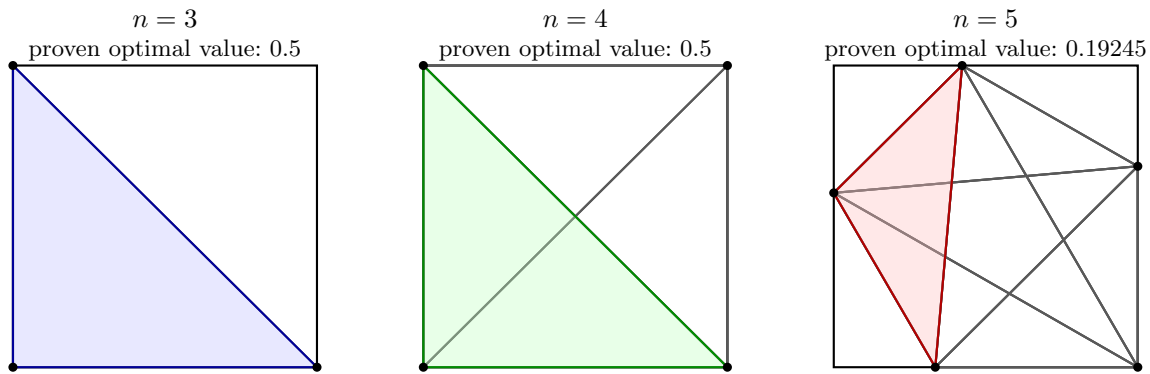



The study of Heilbronn's triangle problem has evolved through several distinct lines of inquiry, from asymptotic bounds to exact computational proofs. Early work focused on worst-case analytic bounds, a field initiated by Roth in 1951, who developed a machinery for proving upper bounds on $H_n^*$ and 
established that $H_n^* \ll \frac{1}{n\sqrt{\log\log n}}$ \cite{roth1951problem} (here, $A\ll B$ means that $A\le cB$ for an absolute constant $c$). He later sharpened this result by obtaining   bounds of the form $H_n^* \ll n^{-\mu+\varepsilon}$ for any $\varepsilon > 0$, where the exponent $\mu$ was improved from approximately 1.1056 \cite{roth1972problem, roth19722problem} to 1.117 \cite{roth1976developments}. 
Building on this framework, the authors of  \cite{komlos1981heilbronn} improved this bound, effectively increasing the exponent to $\mu = 8/7$, by proving $H_n^* \ll n^{-8/7+\varepsilon}$. 
Later,  the same authors established a complementary lower bound of $H_n^* \gtrsim (\log n)/n^{2}$ in~\cite{komlos1982lower}, disproving Heilbronn's original conjecture that $H_n^* = \Theta(1/n^{2})$. 


Complementing the asymptotic theory, researchers have also employed heuristic and constructive methods to find specific point placements that yield strong lower bounds~\cite{goldberg1972maximizing,comellas2002new,bertram2000algorithm}. Distinct from these heuristic approaches, exact results for specific small values of $n$ have been proven using analytical and computer-algebra methods. In particular, the precise values for $n=5$ and $n=6$ were established in \cite{lu1991goldberg}, and more recently for $n=7$ and $n=8$, proofs leveraging symbolic computation and automated deduction were used to certify optimal configurations~\cite{zeng2008heilbronn, dehbi2022heilbronn}. Note that the deterministic worst-case value $H_n$ contrasts sharply with the average case for random points, where the smallest triangle area is expected to be much smaller, on the order of $\Theta(1/n^{3})$ \cite{jiang1999expected, jiang2002average}.


For instances such as $n=9$, where exact analytical proofs have remained elusive, high-throughput discretization and parallel enumeration have produced extremely tight numerical bounds. For instance, using a specialized grid search method accelerated by GPGPU computing, the authors of \cite{chen2017searching} obtained the    following bounds:  $0.054875999 < H_{9}^* < 0.054878314$. They established this dual bound by employing a recursive branch-and-bound algorithm over increasingly fine grid discretizations of the unit square, where local upper bounds were calculated for each rectangle configuration to rigorously prune the search space~\cite{chen2017searching}.
In contrast, the approach adopted in this work frames the problem as a maximin mathematical program, leveraging deterministic global optimization techniques—specifically Mixed-Integer Nonlinear Programming (MINLP)—to achieve certified results. Rather than relying on exhaustive enumeration, our methodology constructs   optimality certificates by integrating tight continuous relaxations with adaptive discretization strategies. This distinction is crucial since, while analytical proofs for $n \ge 9$ remain elusive, existing computational methods often focus on heuristic lower bounds without providing the matching  upper bounds necessary for certification. Consequently, rigorous global optimization frameworks specifically tailored for certifying Heilbronn numbers have remained largely underdeveloped in literature.

In this paper, we apply global optimization methods to the Heilbronn triangle problem. 
In particular, we develop three novel mathematical programming models to solve this problem with global optimization solvers   Gurobi \cite{Gurobi} and BARON \cite{baron}. The difference between these models lies in the way they handle the non-convex relations in problem~\eqref{eq:heilbronnGeneric} as detailed in Section~\ref{Approaches}.
Two of these models belong to the class of mixed-integer quadratically constrained programming (MIQCP) whereas the other model belongs to the class of QCP. We propose several enhancements to strengthen these formulations by exploiting problem structure and using well-known techniques from global optimization literature such as bound tightening and symmetry breaking. We carry out extensive computations to decide the best performing model as well as the best performing solver in addition to the most effective {combination of} enhancements. Our rigorous analysis has enabled us to certify the exact value $H_9^* = 0.0548767$ (to within solver tolerances) for the first time in the literature, closing the optimality gap that remained in previous computational studies.
We now list our main contributions  in this paper:
\begin{itemize}

    \item \textit{Novel formulations.}
    We develop three novel formulations for  the Heilbronn triangle problem that can be used to solve    any instance of size \(n\). These formulations belong to the class of MIQCP or QCP, hence, any global optimization solver supporting these classes of problems is readily applicable to solve the Heilbronn triangle problem.

    \item \textit{Formulation enhancements.}
    We propose several enhancements to   strengthen  the proposed formulations. Some of these enhancements are \textit{universal}, in the sense that they are applicable to any instance with an arbitrary number of points $n$. Such enhancements involve bound tightening and symmetry breaking inequalities, commonly used in the global optimization literature. We also propose instance-specific enhancements that work for a particular value of $n$. For example, we show that for the nine-point problem, eight of these points must be located close to the boundary of the unit box (see  Proposition~\ref{prop:nine-eight-edge} for a precise statement).

    \item \textit{Certified solutions by instance size.}
    Using our novel formulations and utilizing our enhancements, we are able to efficiently recover the proven global optima for \(n=3,\dots,8\) within solver tolerances in the order of seconds. 
    We then certify the global optimum for \(n=9\) as $H_9^*=0.0548767$ by establishing a dual bound that matches the smallest area of a feasible configuration. This is a significant improvement over the previous state-of-the-art \cite{chen2017searching}, which employed a specialized GPGPU-based grid search over 31 days to establish a narrow bracket ($0.054875999 < H_{9}^* < 0.054878314$), whereas our method provides a certified value within one-day of computational effort.  
    Finally, for \(n=10\), we also provide, without an optimality certificate, a configuration that matches the best-known result. Our model found a placement yielding a lower bound of $H_{10}^* \ge 0.0465369$ in approximately 800 seconds, which is consistent with the best empirical value of $0.046537$ previously reported \cite{comellas2002new}.

\end{itemize}




The remainder of this paper is organized as follows. Section~\ref{sec:problem_and_approaches} formally introduces our three novel formulations for  the Heilbronn triangle problem. Section~\ref{sec:enhancements} proposes three groups of enhancements that significantly strengthen these formulations. Section~\ref{sec:computational_experiments} reports the results of our extensive computational study, which involves comparing   formulations and solvers together with an assessment of  the impact of each enhancement group. Finally, Section~\ref{sec:conclusions} summarizes our main findings and discusses directions for future work. 

\section{Problem Formulation and Solution Approaches}\label{sec:problem_and_approaches}

In this section, we formally define the Heilbronn triangle problem and present three novel mathematical programming formulations that can be solved using global optimization solvers. 

\subsection{Problem Formulation}
We first reformulate problem~\eqref{eq:heilbronnGeneric} as
\begin{align}
    \max_{x,y,H,S} \ & H_n \\
    \textup{s.t.} \  & |S_{ijk}| \ge H_n  &1&\le i < j < k \le n  \label{eq:S-and-H} \\
      \ &      S_{ijk} = \frac12 [ x_i(y_j-y_k) - x_j(y_i-y_k)+x_k(y_i-y_j) ] & 1&\le i<j<k \le n \label{eq:signedArea} \\
    &-\frac12 \le S_{ijk} \le \frac12  & 1 & \le i<j<k \le n  \label{eq:bounds-S}\\
     &     0 \le x_i, y_i \le 1  & 1& \le i \le n     \label{eq:bounds-xy} \\ 
    &      \underline H_n \le H_n \le \overline H_n , \label{eq:bounds-H} 
\end{align}
where variable $S_{ijk}$ represents the  \textit{signed area} of the triangle with corner points $i$, $j$ and $k$, and $H_n$ is the objective function variable representing the area of the smallest triangle of the Heilbronn triangle problem with $n$ points. Note that bounds on variable $S_{ijk}$ stated in constraint~\eqref{eq:bounds-S} follow from the definition of this variable given in constraint~\eqref{eq:signedArea} and the fact that   $0 \le x_i, y_i \le 1$ as given in constraint~\eqref{eq:bounds-S}. The procedure to obtain bounds on variable~$H_n$ is explained later in Section~\ref{sec:boundsHn}.
While constraint~\eqref{eq:bounds-H} might appear redundant for the problem definition, these \emph{a priori} bounds are computationally critical. In particular, the upper bound $\overline{H}_n$ is essential for constructing a tight linearization of the absolute value constraint~\eqref{eq:S-and-H} (as detailed in \Cref{Approach1}), and the lower bound $\underline{H}_n$ allows the solver to effectively prune the branch-and-bound search tree.

Notice that the above formulation is highly non-linear and non-convex due to the reverse convex constraint~\eqref{eq:S-and-H} and the bilinear constraint~\eqref{eq:signedArea}. In the sequel, we propose three novel approaches to obtain MIQCP or QCP formulations so that available global optimization  solvers can be directly used.

\subsection{Solution Approaches}\label{Approaches}
The primary challenge in formulating the Heilbronn triangle problem for global optimization solvers is handling the non-convex absolute value function required for the standard area calculation. In this section, we present three novel mathematical programming approaches, each designed to address this core difficulty in a different way. 

\subsubsection{Approach 1}
\label{Approach1}

The main idea of this approach is to rewrite   the non-convex constraint $H_n \le |S_{ijk}|$ as a pair of linear inequalities with the help of additional binary variables.  To achieve this, a binary variable $b_{ijk}$ is introduced for each triangle to disjunctively model the sign of the area $S_{ijk}$. 
In particular, the desired relationship is established using the following   set  of inequalities:
\begin{subequations}\label{eq:linearWithBinary}
\begin{align}
  &  (1-b_{ijk})\left(\overline{H}_n+\frac{1}{2}\right)  + S_{ijk} \ge H_n     &1&\le i<j<k \le n \\
  &  \ b_{ijk}\left(\overline{H}_n+\frac{1}{2}\right)  - S_{ijk} \ge H_n     &1&\le i<j<k \le n \\
 & -\frac12 \le S_{ijk} - \left(\frac12 + \underline H_n\right) b_{ijk} \le -\underline H_n   &1& \le i<j<k \le n \\
    &   \ b_{ijk}\in\{0,1\}   &1&\le i<j<k \le n .
\end{align}
\end{subequations}
The first pair of inequalities in \eqref{eq:linearWithBinary} encodes the non-convex constraint $H_n \le |S_{ijk}|$ via  binary variables while the second pair is also valid and is included in the implementation to strengthen the formulation.

Then, we obtain the following equivalent mixed-integer quadratically constrained program (MIQCP):
\[
\max_{x,y,H, S, b}   \left\{ H_n : \eqref{eq:signedArea}-\eqref{eq:linearWithBinary}   \right\}.
\]


To further structure the model, we isolate the bilinear terms from the area formula by defining new variables $w_{ij}$:
\begin{equation}\label{eq:w=xy}
w_{ij} = x_i y_j \quad 1\le i, j \le n,
\end{equation}


This results in a linear expression for the signed area in terms of these new variables; and rewriting~\eqref{eq:signedArea} as
\begin{equation}\label{eq:signedArea-w}
    S_{ijk} = \frac12 [ (w_{ij}-w_{ik}) - (w_{ji}-w_{jk}) + (w_{ki}-w_{kj}) ] 
    \quad 1\le i<j<k \le n.
\end{equation}
The resulting MIQCP formulation, which is the form we use in our implementation, is as follows:
\[
\max_{x,y,w,H, S, b}   \left\{ H : \eqref{eq:bounds-S}- 
\eqref{eq:signedArea-w}  \right\}.
\]
This model is an MIQCP because it contains both the binary variables $b_{ijk}$ and the non-convex quadratic equality constraints in \eqref{eq:w=xy}.






\subsubsection{Approach 2}
\label{Approach2}

The main idea of this approach is to avoid the binary variables of \Cref{Approach1} by using a non-convex quadratic formulation. Instead of linearizing the absolute value, we introduce a new continuous, non-negative variable, $U_{ijk} \ge 0$, to represent the (unsigned) area of the triangle formed by points $i$, $j$, and~$k$.
The non-convex constraint $H_n \le |S_{ijk}|$ from \eqref{eq:S-and-H} is then replaced by the following pair of constraints:
\begin{align}
    U_{ijk} &\ge H_n  \ & 1&\le i<j<k \le n  \label{eq:U_geq_H}  \\
    S_{ijk}^2 &= U_{ijk}^2  & 1&\le i<j<k \le n. \label{eq:S_sq_eq_U_sq}
\end{align}
Since $U_{ijk} \ge 0$ is implied by the fact that $H_n \ge 0$, the quadratic equality \eqref{eq:S_sq_eq_U_sq} is equivalent to $U_{ijk} = |S_{ijk}|$. This substitution transforms the model into a  QCP  that contains no binary variables.

As in Approach 1, we use the intermediate $w_{ij}$ variables defined in \eqref{eq:w=xy} and the linear expression for $S_{ijk}$ from \eqref{eq:signedArea-w}. The resulting QCP formulation, which matches our  implementation, is as follows:
\[
\max_{x,y,w,H, S, U} \left\{ H_n : \eqref{eq:bounds-S}-\eqref{eq:bounds-H}, \eqref{eq:w=xy}, \eqref{eq:signedArea-w}, \eqref{eq:U_geq_H}, \eqref{eq:S_sq_eq_U_sq}  \right\}.
\]
Notice that this formulation is a non-convex QCP due to quadratic equality constraints in~\eqref{eq:w=xy} and~\eqref{eq:S_sq_eq_U_sq}.



\subsubsection{Approach 3}
\label{Approach3}

The main idea of this approach is to discretize the continuous variable  $x_i \in [0, 1]$ in order to obtain a formulation with a strong relaxation. In particular, for a positive integer $P$, 
we discretize variable $x_i$  as
\[
x_i = \sum_{p=1}^P 2^{-p} \xi_{ip} + \epsilon_i, \quad \text{where } \xi_{ip} \in \{0,1\} \text{ and } \epsilon_i \in [0, 2^{-P}].
\]
Then, we have
\begin{equation*}
    w_{ij} = x_i y_j = \sum_{p=1}^P 2^{-p} \xi_{ip} y_j + \epsilon_i y_j, 
    \quad 1 \le i,j \le n.
\end{equation*}
Observe that we  rewrite the `continuous-times-continuous' term  $w_{ij} = x_i y_j$, as a sum of `binary-times-continuous' products $\xi_{ip} y_j$ and another `continuous-times-continuous' term  $\epsilon_i y_{j}$, which involves a non-negative variable $\epsilon_i$ with a small upper bound. This new structure is highly advantageous because `binary-times-continuous' products can be linearized exactly using standard McCormick envelopes defined in Fact~\ref{fact:McCormick} and `continuous-times-continuous' product is easier to handle  when at least one variable has a small range. 
\begin{fact}\label{fact:McCormick}
    The convex hull of the set $\{ (x,y) \in [\underline x, \overline x]\times[\underline y, \overline y]: w = xy\}$ is given by the McCormick envelopes~\cite{McCormick1976}:
    \[
    \mathcal{M}(\underline x, \overline x; \underline y, \overline y) = \{ (x,y,w) : \
    w \ge \underline x y + x \underline y  -\underline x \underline y , \
    w \ge \overline x y+x \overline y - \overline x \overline y , \
    w \le \overline x y+x \underline y - \overline x \underline y , \
    w \le \underline x y+x \overline y  - \underline x \overline y \} .
    \]
\end{fact}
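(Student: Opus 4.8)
The plan is to prove the two set inclusions $\mathrm{conv}(\mathcal{G}) \subseteq \mathcal{M}(\underline{x},\overline{x};\underline{y},\overline{y})$ and $\mathcal{M}(\underline{x},\overline{x};\underline{y},\overline{y}) \subseteq \mathrm{conv}(\mathcal{G})$, where I write $\mathcal{G} := \{(x,y,w) : (x,y)\in[\underline{x},\overline{x}]\times[\underline{y},\overline{y}],\ w=xy\}$ for the bilinear graph. For the first inclusion, since $\mathcal{M}$ is an intersection of half-spaces and hence convex, it suffices to verify $\mathcal{G}\subseteq\mathcal{M}$, i.e. that $(x,y,xy)$ satisfies the four defining inequalities whenever $(x,y)$ lies in the box. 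Each inequality follows by expanding a single nonnegative product: for instance $(x-\underline{x})(y-\underline{y})\ge 0$ rearranges to $xy \ge \underline{x}y + x\underline{y} - \underline{x}\,\underline{y}$ (the first lower bound), while $(\overline{x}-x)(y-\underline{y})\ge 0$ gives $xy \le \overline{x}y + x\underline{y} - \overline{x}\,\underline{y}$ (the first upper bound); the remaining two arise symmetrically from $(\overline{x}-x)(\overline{y}-y)\ge 0$ and $(x-\underline{x})(\overline{y}-y)\ge 0$. Taking convex combinations then yields $\mathrm{conv}(\mathcal{G})\subseteq\mathcal{M}$ at once.

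The reverse inclusion is the substantive direction. First I would record that the four McCormick inequalities already imply box membership $x\in[\underline{x},\overline{x}]$, $y\in[\underline{y},\overline{y}]$: subtracting suitable pairs (e.g. the first lower bound from the first upper bound) produces, after factoring, conditions such as $(\overline{x}-\underline{x})(y-\underline{y})\ge 0$, which force the individual bounds. Next, given any $(x_0,y_0,w_0)\in\mathcal{M}$, I would write $x_0=(1-s)\underline{x}+s\overline{x}$ and $y_0=(1-t)\underline{y}+t\overline{y}$ with $s,t\in[0,1]$, and seek weights on the four corner points $(\underline{x},\underline{y},\underline{x}\,\underline{y})$, $(\underline{x},\overline{y},\underline{x}\,\overline{y})$, $(\overline{x},\underline{y},\overline{x}\,\underline{y})$, $(\overline{x},\overline{y},\overline{x}\,\overline{y})$ of $\mathcal{G}$ reproducing $(x_0,y_0,w_0)$. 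Matching the $x$- and $y$-coordinates together with the normalization $\sum\lambda=1$ leaves a single free parameter $\mu:=\lambda_{\overline{x}\overline{y}}$, with the remaining weights equal to $t-\mu$, $s-\mu$, and $1-s-t+\mu$; nonnegativity confines $\mu$ to $[\max(0,s+t-1),\,\min(s,t)]$. The resulting $w$-value is affine in $\mu$ with slope $(\overline{x}-\underline{x})(\overline{y}-\underline{y})\ge 0$, hence monotone, so as $\mu$ sweeps its interval the attainable $w_0$ ranges precisely over $[\,\text{lower McCormick envelope},\,\text{upper McCormick envelope}\,]$ evaluated at $(x_0,y_0)$. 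Membership $(x_0,y_0,w_0)\in\mathcal{M}$ is exactly the statement that $w_0$ lies in this range, so a feasible $\mu$—and with it an explicit convex combination exhibiting $(x_0,y_0,w_0)\in\mathrm{conv}(\mathcal{G})$—exists.

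The main obstacle, and the step demanding the most care, is verifying that the endpoints of the attainable $w$-interval coincide \emph{exactly} with the active McCormick bounds rather than merely being bracketed by them. Concretely, this requires checking that the weight configurations at $\mu=\max(0,s+t-1)$ and $\mu=\min(s,t)$, after substituting $s=(x_0-\underline{x})/(\overline{x}-\underline{x})$ and $t=(y_0-\underline{y})/(\overline{y}-\underline{y})$, algebraically reproduce the right-hand sides of the defining inequalities; a short computation confirms, for example, that at $\mu=0$ (the regime $s+t\le 1$) the minimal $w$ equals the binding lower bound $\underline{x}y_0+x_0\underline{y}-\underline{x}\,\underline{y}$, with the complementary regime handled by the other bound. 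The monotonicity of $w$ in $\mu$ is what guarantees no intermediate value is skipped. Finally, I would dispose of the degenerate cases $\overline{x}=\underline{x}$ or $\overline{y}=\underline{y}$ separately, where the box collapses to a segment or a point and both sides reduce to a segment or a singleton, so the claim holds trivially.
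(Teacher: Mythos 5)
The paper states this result as a \emph{Fact} and supplies no proof of its own, only a citation to McCormick's 1976 paper, so there is no in-paper argument to compare against line by line; what you have written is a complete, self-contained proof of the classical result (the convex-hull characterization is usually credited to Al-Khayyal and Falk, 1983, with McCormick's paper introducing the four inequalities as a relaxation). Your argument is correct. The easy inclusion $\mathrm{conv}(\mathcal{G})\subseteq\mathcal{M}$ via the four nonnegative products $(x-\underline{x})(y-\underline{y})$, $(\overline{x}-x)(\overline{y}-y)$, $(\overline{x}-x)(y-\underline{y})$, $(x-\underline{x})(\overline{y}-y)$ matches the stated inequalities exactly. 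For the reverse inclusion, your parameterization of the corner weights by the single free variable $\mu$, with the others equal to $t-\mu$, $s-\mu$, $1-s-t+\mu$, is right; the feasible window $[\max(0,s+t-1),\min(s,t)]$ is always nonempty for $s,t\in[0,1]$; and the slope $(\overline{x}-\underline{x})(\overline{y}-\underline{y})$ of $w(\mu)$ together with the endpoint computations does give exactly $w(\max(0,s+t-1))=\max$ of the two lower bounds and $w(\min(s,t))=\min$ of the two upper bounds (indeed $\mathrm{LB}_1-\mathrm{LB}_2=(1-s-t)(\overline{x}-\underline{x})(\overline{y}-\underline{y})$, which confirms that the correct bound binds in each regime). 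Your preliminary observation that the four inequalities already imply box membership in the nondegenerate case, and your separate disposal of the degenerate boxes, close the remaining gaps. Nothing further is needed.
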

%
This reformulation is expressed through the following set of constraints, where new variables $\phi_{ipj}$ and $\omega_{ij}$ are introduced to represent the product terms:
\begin{subequations}\label{eq:discretization-Exact}
     \begin{align}
     &   w_{ij}=\sum_{p=1}^P 2^{-p} \phi_{ipj} + \omega_{ij} &1& \le i,j \le n \label{eq:discretization-Exact1} \\
    & (\xi_{ip}, y_j, \phi_{ipj} ) \in \mathcal{M}(0,1; 0,1) &1& \le i, j \le n, \ p=1,\dots,P \label{eq:discretization-Exact2} \\
    & \xi_{ip} \in \{0,1\}  &1& \le i \le n, \ p=1,\dots,P \label{eq:discretization-Exact2-} \\
    &
    \omega_{ij} = \epsilon_i y_j &1& \le i,j \le n \label{eq:discretization-Exact3}\\
    &
    0 \le \epsilon_i \le 2^{-P} &1& \le i  \le n. \label{eq:discretization-Exact4}
     \end{align}
\end{subequations}
The resulting MIQCP formulation is as follows:
\[
\max_{x,y,w,H, S, b, \xi, \phi, \omega}   \left\{ H : \eqref{eq:bounds-S}-\eqref{eq:linearWithBinary} , 
\eqref{eq:signedArea-w}, \eqref{eq:discretization-Exact}  \right\}.
\]
While this formulation is exact, it still contains the non-convex quadratic constraint \eqref{eq:discretization-Exact3}. To obtain a model solvable with MILP techniques, we relax this final non-convex term. This gives rise to a Mixed-Integer Linear Programming (MILP) relaxation, which we use in our computational study in \Cref{ComparingApproaches}. This MILP relaxation is formulated as follows: 
\[
\max_{x,y,w,H, S, b, \xi, \phi, \omega}   \left\{ H : \eqref{eq:bounds-S}-\eqref{eq:linearWithBinary} , 
\eqref{eq:signedArea-w}, \eqref{eq:discretization-Exact1}- \eqref{eq:discretization-Exact2-} ,
(\epsilon_i, y_j, \omega_{ij} ) \in \mathcal{M}(0, 2^{-P}; 0,1 ) \ 1 \le i,j \le n  \right\}.
\]

This approach follows a line of papers  that has successfully implemented similar ideas   on related problems, including the pooling problem   \cite{dey2015analysis,dey2020convexifications, jalilian2023improved} and the circle packing problem \cite{MaxNumber_Litvinchev2015_general,tacspinar2024discretization}, among others.

Note that all the formulations introduced in this section are challenging to solve in practice. In order to accelerate the solution process, we introduce several enhancements  in the next section.

\section{Enhancements}\label{sec:enhancements}

We introduce several enhancements 
to strengthen the formulations introduced before and shorten the overall computation time. 
We categorize these enhancements into thematic categories, hereafter termed as \textit{enhancement groups}, and assess the impact of each group as a whole rather than of every individual enhancement as detailed below. 

\subsection{First Group: Bound Tightening and Symmetry Breaking}


\subsubsection{Bounds for the Objective Value \(H_n\)}
\label{sec:boundsHn}
       
Note that  any feasible solution yields a valid lower bound for a maximization problem and it is trivial to generate feasible \(n\)-point configurations in our case. To this end, we sample  \(10^{6}\) i.i.d.  \(n\)-point configurations from the unit square $[0,1]^2$, and we set the lower bound for $H_n$ as
        \[
          \underline H_n \;=\; \max\{\text{minimum-triangle-area of the sample}\}.
        \]

To obtain an upper bound on $H_n$, we rely on the value of $H_{n-1}^*$ as formalized below:

\begin{prop}\label{prop:n-1PointGivesUpperBound}
For \(n>3\), the optimal minimum‑triangle area is non-increasing in \(n\); in particular,
\[
  H_n^* \;\le\; H_{n-1}^* .
\]
\end{prop}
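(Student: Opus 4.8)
The plan is to exploit the obvious monotonicity that deleting a point from a configuration can only remove triangles, never create new ones, so the smallest remaining triangle is no smaller than before. Concretely, for a configuration $P=\{(x_i,y_i)\}_{i=1}^m \subseteq [0,1]^2$ of $m\ge 3$ points, write $m(P)$ for its minimum triangle area, i.e. the minimum of $\tfrac12|x_i(y_j-y_k)-x_j(y_i-y_k)+x_k(y_i-y_j)|$ over all triples of $P$; then $H_m^* = \max_P m(P)$ over all $m$-point configurations in the unit square, matching \eqref{eq:heilbronnGeneric}.

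First I would fix an optimal $n$-point configuration $P^\star$ attaining $m(P^\star)=H_n^*$. Since the feasible set $[0,1]^{2n}$ is compact and the map $(x,y)\mapsto m(P)$ is continuous (being the pointwise minimum of finitely many continuous area functions), the maximum in \eqref{eq:heilbronnGeneric} is attained, so such a $P^\star$ exists. I would then delete any single point of $P^\star$, say the $n$-th one, to obtain an $(n-1)$-point configuration $P'\subseteq[0,1]^2$; note that $n-1\ge 3$ because $n>3$, so $P'$ is a legitimate instance of the $(n-1)$-point problem.

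The key step is the observation that the family of triangles determined by $P'$ is precisely the subfamily of triangles of $P^\star$ whose three vertices all avoid the deleted point. Taking a minimum over a smaller index set can only increase (or preserve) the value, hence $m(P')\ge m(P^\star)=H_n^*$. Since $P'$ is itself a feasible $(n-1)$-point configuration, it certifies the lower bound $H_{n-1}^*\ge m(P')$, and chaining the two inequalities gives $H_{n-1}^*\ge H_n^*$, as claimed.

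There is essentially no analytic obstacle here; the only points requiring care are (i) the existence of the maximizer $P^\star$, which follows from compactness and continuity as above (alternatively, one may argue with a sequence of configurations whose minimum areas converge to $H_n^*$ and avoid attainment entirely), and (ii) the trivial range check $n-1\ge 3$ guaranteeing that the $(n-1)$-point problem is well defined. Both are immediate, so the statement reduces to the one-line monotonicity observation that restricting to a subset of triples cannot decrease the minimum.
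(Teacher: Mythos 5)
Your proposal is correct and follows essentially the same approach as the paper: delete one point from an optimal $n$-point configuration and observe that the resulting $(n-1)$-point configuration certifies $H_{n-1}^*\ge H_n^*$. Your version is in fact marginally cleaner, since you delete an arbitrary point and invoke the monotonicity of the minimum over a smaller set of triples (and you also justify attainment of the maximum by compactness), whereas the paper expends a little extra effort selecting a point whose removal preserves the minimum area exactly, which is not needed for the inequality.
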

\begin{proof}
Consider an optimal solution of the $n$-point problem given as $(x_i^*,y_i^*)$, $i=1,\dots,n$ with the minimum area of $H_n^*$. Choose a point $(x_j^*,y_j^*)$ among these $n$ points such that the  area of the smallest triangle formed by the remaining $n-1$ points is still $H_n^*$ (note that such a point exists: either all triangles formed by $n$ points have area of $H_n^*$, in which case any point can be chosen; or there exists a point which is not a corner of the smallest triangle that can be chosen). Clearly, the points $(x_i^*,y_i^*)$, $i=1,\dots,n, i \neq j$ is a feasible solution of the   $(n-1)$-point problem with objective function value of $H_n^*$. Therefore, we conclude that $H_n^* \le H_{n-1}^*$.
\end{proof}
Due to Proposition~\ref{prop:n-1PointGivesUpperBound}, we set $\overline{H}_n = H_{n-1}^*$.

\subsubsection{Symmetry Breaking}\label{sec:order-conventions}

Similar to other geometry optimization problems, the Heilbronn triangle problem has many symmetrical solutions in the sense that by relabeling the points corresponding to a certain solution, we can obtain solutions that are fundamentally the same. In order to prevent this issue, which causes the optimization solvers to stall, we propose several symmetry breaking constraints given in the next proposition.

\begin{prop}\label{prop:order-conventions}
There exists an optimal solution  $\{(x_k,y_k)\}_{k=1}^n$ to  the Heilbronn problem satisfying the following inequalities: 
\begin{enumerate}[(i)]
  \item \(
     y_1 \;\le\; y_2 \;\le\; \cdots \;\le\; y_n  
  \).

  \item 
  \( w_{i1} \le w_{i2} \le \cdots \le w_{in} \). 

\item \(
  x_2 \;\ge\; x_1  \text{ and } x_1 \;\le\; \tfrac{1}{2}
\).
\end{enumerate}
\end{prop}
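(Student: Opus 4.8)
The plan is to exploit the area-preserving symmetries of the Heilbronn objective: any transformation of the point set that leaves every triangle area unchanged maps optimal solutions to optimal solutions, so it suffices to exhibit one optimal configuration in each symmetry class enjoying the desired order properties.

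First, for (i), I would observe that the objective in~\eqref{eq:heilbronnGeneric} depends only on the unordered set of points and not on their labels. Starting from any optimal solution, I relabel the points in non-decreasing order of $y$-coordinate, which gives $y_1 \le y_2 \le \cdots \le y_n$ and establishes (i) at no cost to optimality.

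Next, for (ii), I would deduce it directly from (i) rather than from any new symmetry. Since $x_i \ge 0$ by~\eqref{eq:bounds-xy}, multiplying the chain $y_1 \le \cdots \le y_n$ by the nonnegative number $x_i$ preserves every inequality, so $w_{i1} = x_i y_1 \le x_i y_2 = w_{i2} \le \cdots \le x_i y_n = w_{in}$ for each fixed $i$. Thus (ii) comes essentially for free once (i) and the variable bounds hold.

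Finally, for (iii), the natural instrument is the reflection $x_i \mapsto 1 - x_i$ applied to all points: it is a symmetry of the unit square, leaves all areas unchanged, and, because it does not alter any $y$-coordinate, preserves (i) and (ii). Applying it whenever $x_1 > \tfrac12$ immediately yields $x_1 \le \tfrac12$. The hard part will be the companion inequality $x_2 \ge x_1$, because this very reflection also sends $x_2 - x_1$ to its negative: it flips the left--right order of the two lowest points at the same instant it reflects $x_1$ across $\tfrac12$. Hence a single horizontal reflection can be arranged to enforce $x_1 \le \tfrac12$ \emph{or} $x_2 \ge x_1$, but not transparently both, and reconciling the two is the main obstacle. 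I would attack the residual case by calling on the remaining symmetries of the square --- the vertical reflection $y_i \mapsto 1 - y_i$ and the diagonal swap $x_i \leftrightarrow y_i$, each followed by a re-sort to restore (i) --- and checking whether some symmetry image satisfies both halves of (iii). The delicate bookkeeping of which of the two lowest points inherits which coordinate is exactly where the argument must be made airtight, and I would expect to need either a careful case analysis over these symmetries or an exchange argument on the two lowest points to close the gap.
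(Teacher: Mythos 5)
Your treatment of (i) and (ii) coincides with the paper's: sort and relabel by $y$-coordinate for (i), then multiply the chain $y_1 \le \cdots \le y_n$ by $x_i \ge 0$ for (ii). The genuine gap is in (iii), and you have correctly located it but not closed it: once the labels are pinned down by the $y$-sort, the two requirements $x_1 \le \tfrac12$ and $x_2 \ge x_1$ cannot both be obtained from the horizontal reflection alone precisely in the residual cases $x_2 < x_1 < \tfrac12$ and (its mirror image) $\tfrac12 < x_1 < x_2$. Your writeup ends by listing candidate symmetries and conceding that the bookkeeping remains to be done, so as written it does not establish the statement.

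The device the paper uses to dispose of this case is not a further symmetry of the square but the transposition of the \emph{labels} $1$ and $2$: if $x_2 < x_1$, swap the labels; if afterwards $x_1 > \tfrac12$, reflect via $(x,y)\mapsto(1-x,y)$ and re-swap if necessary. Both operations preserve every triangle area, hence optimality. Be aware, though, that this transposition is compatible with the ordering in (i) only when $y_1 = y_2$ (that is, when a tie in the $y$-sort leaves free which of the two lowest points receives label $1$); if $y_1 < y_2$ strictly, swapping labels $1$ and $2$ destroys (i), so the paper's argument is itself implicitly relying on this tie-breaking freedom rather than resolving the tension you identified in full generality. The alternative symmetries you propose (the vertical reflection and the diagonal swap) are unlikely to help directly, since they change which two points of the configuration are lowest and hence which coordinates play the roles of $x_1$ and $x_2$.
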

\begin{proof}
In the proof, we will use the fact  that the feasible set and objective function are invariant under point relabelings  and under the vertical reflection \(x\mapsto 1-x\).  
(i) This result follows by sorting the  set \(\{y_k\}_{k=1}^n\) and relabeling accordingly to obtain \(y_1\le\cdots\le y_n\).
(ii) This result is a consequence of Item (i) and the definition of $w_{ij}=x_iy_j$.
(iii) 
This result is verified as follows: If \(x_2<x_1\), then we swap labels \(1\) and \(2\).
If   \(x_1>\tfrac12\), we reflect the entire configuration across the line \(x=\tfrac12\) via \((x,y)\mapsto (1-x,y)\) (and, if needed, re-swap labels \(1\) and \(2\) to keep \(x_2\ge x_1\)).
Both relabeling and reflection preserve all triangle areas and the constraints of the square, hence feasibility and the objective value are preserved.
\end{proof}

\subsection{Second Group: Points on the Boundary} \label{sec:BoundTightening}

As the Heilbronn triangle problem seeks to locate points in  a way that the area of the smallest triangle is maximized, it is intuitive to expect some points to be located at the edges of the unit square $[0,1]^2$. The second group of enhancements formalizes this intuition.

\subsubsection{At Least One Point on Each Edge (\(n \ge 4\))}\label{sec:onepointeachedge}


\begin{prop}\label{onepointoneachedge}
For any optimal solution of the Heilbronn triangle problem with at least four points, each edge of the unit square $[0,1]^2$ contains at least one point. 
\end{prop}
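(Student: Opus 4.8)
The plan is to argue by contradiction: suppose that in some optimal configuration, one edge of the unit square, say the bottom edge $y=0$, contains no point. I will show that this lets us strictly increase the minimum triangle area, contradicting optimality. The key idea is an \emph{affine stretching} transformation. Since no point lies on the bottom edge, all points satisfy $y_i \ge \delta$ for some $\delta>0$ (namely $\delta = \min_i y_i > 0$). I would then apply the vertical affine map $(x,y)\mapsto (x, (y-\delta)/(1-\delta))$, which pushes the lowest point(s) down onto the edge $y=0$ while keeping all points inside $[0,1]^2$, and rescales the $y$-coordinates by the factor $1/(1-\delta) > 1$.

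The crucial observation is how this map acts on triangle areas. Because the transformation is affine with a vertical scaling of $1/(1-\delta)$ and no horizontal scaling, the signed area $S_{ijk}$ from \eqref{eq:signedArea} transforms as follows: the translation $y\mapsto y-\delta$ leaves all areas invariant (areas are translation-invariant), and the scaling $y\mapsto y/(1-\delta)$ multiplies every signed area by exactly $1/(1-\delta)$. Hence every triangle area, in particular the minimum one $H_n^*$, is multiplied by $1/(1-\delta) > 1$, producing a strictly larger minimum area. The transformed configuration is feasible (it lies in $[0,1]^2$), so this contradicts the optimality of the original configuration unless $\delta = 0$, i.e.\ unless some point already lies on the bottom edge. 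The same argument applied to each of the four edges (using the horizontal analogue for the left and right edges, and $y\mapsto 1-y$ or $x\mapsto 1-x$ compositions for the top and right) yields the claim.

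I would set up the four cases uniformly by noting that the unit square has a natural symmetry group, so it suffices to prove the statement for one edge and invoke the symmetry $(x,y)\mapsto(1-x,y)$ and $(x,y)\mapsto(y,x)$ to cover the remaining three. One subtlety to verify carefully is that the stretching does not move any point \emph{out} of the square: the map $(y-\delta)/(1-\delta)$ sends $[\delta,1]$ onto $[0,1]$, so feasibility is preserved precisely because $\delta$ was chosen as the minimum $y$-coordinate. Another point worth a line of justification is why $n\ge 4$ is needed: with only three points one can already achieve the maximal area $1/2$ by placing them at corners, and the stretching argument still applies but there is no contradiction to derive since the bound is already tight; the hypothesis $n\ge4$ ensures the optimal value is strictly below $1/2$, so there is genuine room to improve.

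The main obstacle, though a mild one, is making the area-transformation computation fully rigorous and transparent rather than merely asserting the scaling factor. I would present it cleanly by writing the determinant formula for $2S_{ijk}$ and factoring out the common $1/(1-\delta)$ from the $y$-differences, which makes the multiplicative effect on areas manifest in one step. Once that identity is in hand, the contradiction is immediate and the remaining edges follow by symmetry, so the bulk of the work is really just the careful statement of the affine map and its effect on the determinant.
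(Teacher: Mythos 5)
Your proof is correct and is essentially the paper's argument: the paper applies a single affine map stretching the bounding box $[\min_i x_i,\max_i x_i]\times[\min_i y_i,\max_i y_i]$ onto $[0,1]^2$ (scaling all areas by $1/((b-a)(d-c))>1$), whereas you stretch one direction at a time and invoke the square's symmetries — the same idea, packaged edge by edge. One side remark is off, though it does not affect the main argument: the hypothesis $n\ge 4$ does \emph{not} ensure the optimal value is strictly below $\tfrac12$ (indeed $H_4^*=\tfrac12$), and no such assumption is needed — the contradiction comes from strictly increasing the minimum area of the purportedly optimal configuration, whatever its value, so the stretching argument in fact forces the bounding box to be the full square for $n=3$ as well.
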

\begin{proof}
Let \(a=\min_i x_i\), \(b=\max_i x_i\), \(c=\min_i y_i\), and \(d=\max_i y_i\).
If \([a,b]\times[c,d]\) is a \emph{proper} subrectangle of \([0,1]^2\), then \((b-a)(d-c)<1\).
Consider the affine map \(T:\,[a,b]\times[c,d]\to[0,1]^2\),
\[
T(x,y)=\Big(\tfrac{x-a}{\,b-a\,}, \tfrac{y-c}{\,d-c\,}\Big).
\]
For any triangle \(\Delta\) determined by three of the points, \(\operatorname{area}(T(\Delta))=\operatorname{area}(\Delta)/((b-a)(d-c))\).
Since \((b-a)(d-c)<1\), the area of each triangle   strictly increases under \(T\), so the minimum triangle area strictly increases as well—contradicting optimality.
Hence, we obtain \((b-a)(d-c)=1\), which implies that \(a=0\), \(b=1\), \(c=0\), and \(d=1\).
\end{proof}
The following corollary is a consequence of Proposition~\ref{onepointoneachedge}.

\begin{cor}\label{cor:implementingx}
For the Heilbronn triangle problem with at least four points, there exists an optimal solution that satisfies the following constraints:
\begin{enumerate}[(i)]
    \item $y_1 = 0$ and $y_n = 1$.
    \item $w_{i1} = 0$ and $w_{in} = x_i$ for $i=1,\dots,n$.
    \item For some binary  variables $c_{1i},c_{2i}\in\{0,1\}$,  $i=1,\dots,n$, 
    \begin{subequations}
        \begin{align}
        &   1 \le \sum_{i=1}^{n} c_{1i} \le 2,  \ \  x_i \leq 1 - c_{1i}  , \ i=1,\dots,n  \label{eq:x=0-onepoint} \\
        &     1 \le \sum_{i=1}^{n} c_{2i} \le 2 \label{eq:x=1-onepoint} , \ \  x_i \geq c_{2i} , \ i =1,\dots,n  .
        \end{align}
    \end{subequations}  
\end{enumerate}
\end{cor}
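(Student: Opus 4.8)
The plan is to derive Corollary~\ref{cor:implementingx} directly from Proposition~\ref{onepointoneachedge} together with the symmetry-breaking ordering of Proposition~\ref{prop:order-conventions}, so that each of the three items becomes a matter of translating a geometric statement into algebraic constraints on the decision variables. Throughout I would fix an optimal solution that simultaneously satisfies the conclusions of Proposition~\ref{onepointoneachedge} (each edge carries at least one point) and Proposition~\ref{prop:order-conventions}(i) (the $y$-coordinates are sorted, $y_1 \le \cdots \le y_n$); these two facts are compatible because the relabeling used to sort the $y$'s does not disturb which edges are occupied.

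For item (i), I would argue that by Proposition~\ref{onepointoneachedge} the bottom edge $\{y=0\}$ and the top edge $\{y=1\}$ are each occupied by at least one point, so $\min_i y_i = 0$ and $\max_i y_i = 1$. Combined with the sorting $y_1 \le \cdots \le y_n$ from Proposition~\ref{prop:order-conventions}(i), the minimum is attained at $y_1$ and the maximum at $y_n$, giving $y_1 = 0$ and $y_n = 1$. Item (ii) is then purely definitional: since $w_{ij} = x_i y_j$ by~\eqref{eq:w=xy}, substituting $y_1 = 0$ yields $w_{i1} = x_i \cdot 0 = 0$, and substituting $y_n = 1$ yields $w_{in} = x_i \cdot 1 = x_i$, for every $i = 1,\dots,n$.

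For item (iii), I would encode the statement that the left edge $\{x=0\}$ and the right edge $\{x=1\}$ are each occupied. Introduce indicator variables $c_{1i}$ (meant to flag $x_i = 0$) and $c_{2i}$ (meant to flag $x_i = 1$). The implication $c_{1i} = 1 \Rightarrow x_i = 0$ is enforced by $x_i \le 1 - c_{1i}$ (which is vacuous when $c_{1i}=0$ and forces $x_i \le 0$, hence $x_i = 0$, when $c_{1i}=1$), and symmetrically $x_i \ge c_{2i}$ forces $x_i = 1$ when $c_{2i} = 1$. The lower bounds $\sum_i c_{1i} \ge 1$ and $\sum_i c_{2i} \ge 1$ then demand that at least one point sits on each of the two vertical edges, which is exactly the content of Proposition~\ref{onepointoneachedge} restricted to those edges. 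The upper bounds $\sum_i c_{1i} \le 2$ and $\sum_i c_{2i} \le 2$ are valid because in a maximin-optimal configuration no more than two distinct points can share a single edge while contributing positive-area triangles — three collinear points on an edge form a degenerate (zero-area) triangle, violating optimality for $H_n^* > 0$; capping the count at two is therefore without loss of optimality.

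The main obstacle is justifying the upper bound of $2$ in~\eqref{eq:x=0-onepoint}--\eqref{eq:x=1-onepoint} rigorously: one must argue that restricting attention to configurations with at most two points per vertical edge does not exclude all optima, i.e.\ that some optimal solution meets this cap. The clean way is to observe that any three points lying on a common line (in particular on an edge) span a degenerate triangle of area $0 < H_n^*$, so an optimal solution can never place three points on the same edge; hence $\sum_i c_{1i} \le 2$ and $\sum_i c_{2i} \le 2$ hold automatically at optimality and may be added as valid inequalities. A secondary subtlety is that the indicator variables need only certify occupancy in one direction (the constraints force $x_i \in \{0,1\}$ when the indicator is on, but an on-boundary point need not have its indicator set); since we only need existence of an optimal solution consistent with these constraints, it suffices to set $c_{1i}, c_{2i}$ according to the actual boundary membership of the chosen optimal configuration and verify feasibility, which the implications above guarantee.
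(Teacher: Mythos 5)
Your proposal is correct and follows essentially the same route as the paper: item (i) from Proposition~\ref{onepointoneachedge} combined with the sorted $y$-coordinates of Proposition~\ref{prop:order-conventions}(i), item (ii) by substituting $y_1=0$ and $y_n=1$ into $w_{ij}=x_iy_j$, and item (iii) by encoding occupancy of the vertical edges with indicator variables, with the cap of two per edge justified by the impossibility of three collinear points in an optimal configuration. Your added remarks on the one-directional nature of the indicator implications and on why the cap does not exclude all optima are careful elaborations of the same argument rather than a different approach.
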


\begin{proof}
    Recall that each edge of the unit square $[0,1]^2$ must contain at least one point due to Proposition~\ref{onepointoneachedge}. Notice that equations in Item (i) follow  from Proposition~\ref{prop:order-conventions}(i) and guarantee that   edges $y=0$ and $y=1$ of the box $[0,1]^2$ contain at least one point.
    Similarly, equations in Item (ii) follow from Proposition~\ref{prop:order-conventions}(ii). 

    Let us now look at equations in Item (iii). In fact, inequalities~\eqref{eq:x=0-onepoint}  (resp. inequalities~\eqref{eq:x=1-onepoint}) guarantee that at least one point is located at the edge $x=0$ (resp. $x=1$) of the box $[0,1]^2$. Moreover, the upper bound of two makes sure that no more than two points are located in these edges, since otherwise, there would be three collinear points.
\end{proof}

\subsubsection{Two Points on Some Edge \;(\(n\in\{7,8\}\))}\label{2pointsedge}

For some \(n\) values, optimal configurations place multiple points on the boundary of the unit square. In particular, for 7-point and 8-point problems, the following result is proven in references \cite{zeng2008heilbronn}  and \cite{dehbi2022heilbronn}, respectively.
\begin{prop}\label{twopointsononeedge}
For any optimal solution of the Heilbronn triangle problem with seven or eight points, there exists an edge of the unit square $[0,1]^2$ that contains exactly two points.
\end{prop}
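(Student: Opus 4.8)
The plan is to reduce the statement to an inspection of the fully characterized optimal configurations established in \cite{zeng2008heilbronn} (for $n=7$) and \cite{dehbi2022heilbronn} (for $n=8$), after first pinning down two elementary structural facts that sharply constrain how points can sit on the boundary. The first fact is already available to us, namely Proposition~\ref{onepointoneachedge}: in any optimal solution with $n \ge 4$, each of the four edges of $[0,1]^2$ carries \emph{at least} one point. The second fact is a non-degeneracy observation: since the optimal value $H_n^*$ is strictly positive, no optimal configuration can contain three collinear points, as they would span a triangle of zero area and contradict the positivity of the minimum. Because all points lying on a single edge are collinear, each edge therefore carries \emph{at most} two points. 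Combining the two facts, every edge of an optimal configuration carries exactly one or exactly two points, so the assertion of the proposition is equivalent to the statement that the four edges are \emph{not} all singly occupied.

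First I would rule out the all-singly-occupied scenario by appealing to the explicit optimal configurations. The cited references do more than certify the optimal value $H_n^*$: through symbolic computation and exact case analysis they determine the optimal point set up to the natural symmetry group $G$ of the problem, namely the isometries of the square (the reflection $x \mapsto 1-x$ used in Proposition~\ref{prop:order-conventions}, together with the remaining symmetries of $[0,1]^2$) composed with relabelings of the points. Writing down the canonical representative of this orbit for $n=7$ and again for $n=8$, one reads off directly that exactly two of the points lie on a common edge of the square, establishing the claim for the canonical representative.

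It then remains to propagate the property from the canonical representative to \emph{every} optimal solution. This is where the symmetry group does the work: each element of $G$ maps edges of the square to edges and preserves the incidence relation ``a point lies on a given edge,'' so the predicate ``some edge contains exactly two points'' is invariant under $G$. Since, by the cited characterization, every optimal configuration is the image of the canonical one under some $g \in G$, and the canonical one has the property, all optimal solutions inherit it. The symmetry-breaking conventions of Proposition~\ref{prop:order-conventions} may be used here to fix the representative and streamline the bookkeeping.

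The main obstacle is precisely the uniqueness-up-to-symmetry that makes the last step go through: the positivity and per-edge bounds above are elementary, but by themselves they do not exclude a hypothetical optimal configuration with one point on each edge and all remaining points interior. Excluding such a configuration is exactly the hard content of \cite{zeng2008heilbronn} and \cite{dehbi2022heilbronn}, which rests on a first-order and perturbation analysis of the active minimal-area triangles---an interior point supported by too few active triangles could be nudged to increase every small triangle through it, contradicting optimality, and carrying this rigidity argument through all cases is what forces the boundary incidences. I would not attempt to reproduce that analysis; instead the proof I propose leans on its conclusion, and the only genuinely self-contained parts are the non-degeneracy observation and the symmetry-invariance argument that extends the property across the entire optimal orbit.
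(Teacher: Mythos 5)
Your proposal is essentially the same as the paper's treatment: the paper gives no proof of Proposition~\ref{twopointsononeedge} at all, simply attributing the result to \cite{zeng2008heilbronn} (for $n=7$) and \cite{dehbi2022heilbronn} (for $n=8$), and your argument likewise defers the hard content---excluding the all-singly-occupied boundary scenario---to those same references. The elementary scaffolding you add (at least one point per edge from Proposition~\ref{onepointoneachedge}, at most two per edge since three collinear points would force $H_n^*=0$, and symmetry invariance of the edge-incidence predicate) is correct but does not change the fact that the essential step is imported from the cited literature, exactly as in the paper.
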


\begin{rem}
While this property is proven in the literature for $n=7$ \cite{zeng2008heilbronn} and $n=8$ \cite{dehbi2022heilbronn}, a corresponding mathematical proof for $n=9$ is not known. As detailed in \Cref{sec:nine-eight-edge}, we are able to computationally certify a related 
near-boundary property for $n=9$.
\end{rem}




The following corollary is a consequence of Proposition~\ref{prop:order-conventions} and  Proposition~\ref{twopointsononeedge}.
\begin{cor}\label{cor:implementingx78}
For the Heilbronn triangle problem with seven or eight points, there exists an optimal solution that satisfies the following constraints:
\begin{enumerate}[(i)]
    \item $y_1 = y_2 = 0$.
    \item $w_{i1} =w_{i2} = 0$ for $i=1,\dots,n$.
\end{enumerate}
\end{cor}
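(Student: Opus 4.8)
The plan is to derive Corollary~\ref{cor:implementingx78} directly from Proposition~\ref{twopointsononeedge} together with the symmetry-breaking conventions of Proposition~\ref{prop:order-conventions}, by arguing that the ``two points on some edge'' guarantee can, after relabeling and reflection, be forced onto the specific edge $y=0$. First I would invoke Proposition~\ref{twopointsononeedge} to obtain an optimal configuration in which some edge of $[0,1]^2$ contains exactly two points. The four edges are $y=0$, $y=1$, $x=0$, and $x=1$. Since the feasible set and objective are invariant under the symmetries used in Proposition~\ref{prop:order-conventions} --- namely point relabelings and the reflections $x \mapsto 1-x$ and $y \mapsto 1-y$ --- I would argue that we may assume without loss of generality that the distinguished edge is $y=0$. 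Specifically, a horizontal edge ($y=1$) can be mapped to $y=0$ by the reflection $y\mapsto 1-y$, and a vertical edge ($x=0$ or $x=1$) can be rotated or reflected onto a horizontal edge; in each case all triangle areas and the unit-square constraints are preserved, so feasibility and optimality persist.

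Once the two distinguished points lie on the edge $y=0$, I would relabel them as points $1$ and $2$, which is consistent with the sorting convention of Proposition~\ref{prop:order-conventions}(i): since these two points have the minimal $y$-coordinate (namely $0$), placing them first in the ordering $y_1 \le y_2 \le \cdots \le y_n$ is compatible. This immediately gives Item (i), $y_1 = y_2 = 0$. For Item (ii), I would simply substitute into the definition $w_{ij} = x_i y_j$ from \eqref{eq:w=xy}: since $y_1 = y_2 = 0$, we have $w_{i1} = x_i y_1 = 0$ and $w_{i2} = x_i y_2 = 0$ for all $i = 1,\dots,n$, exactly as claimed. This step is a routine consequence of the definition and requires no further argument.

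The main obstacle, and the only substantive point, is the reduction establishing that the edge with two points may be taken to be $y=0$ rather than an arbitrary edge. The subtlety is ensuring that the chosen symmetry operation is simultaneously compatible with the ordering convention of Proposition~\ref{prop:order-conventions}(i) that was already imposed: after reflecting or rotating to place the doubled edge at $y=0$, one must re-sort the $y$-coordinates and relabel so that $y_1 \le \cdots \le y_n$ holds, and then verify that the two boundary points indeed receive labels $1$ and $2$. Because these two points attain the minimum $y$-value, this is automatic. I would present this reduction carefully but briefly, emphasizing that the group of symmetries (the dihedral symmetries of the square) acts transitively on the edges while preserving all relevant structure, so there is no loss of generality in fixing the edge to be $y=0$.
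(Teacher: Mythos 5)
Your proof is correct and follows essentially the same route as the paper, which states Corollary~\ref{cor:implementingx78} as an immediate consequence of Proposition~\ref{twopointsononeedge} and Proposition~\ref{prop:order-conventions} without writing out the details. Your expansion --- using the dihedral symmetries of the square to move the doubled edge to $y=0$, relabeling the two boundary points as $1$ and $2$ (automatic under the sorting convention since they attain the minimal $y$-value), and reading off $w_{i1}=w_{i2}=0$ from the definition $w_{ij}=x_iy_j$ --- is exactly the argument the paper leaves implicit.
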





\subsubsection{Approximately Eight Boundary Points (\(n=9\))}
\label{sec:nine-eight-edge}

In all of our experiments related to the Heilbronn triangle problem with at least nine points,  we observe that eight points are placed  on the boundary of the unit square $[0,1]^2$ across all best configurations. The same pattern was also seen in all best-known configurations for \(n\ge 9\) reported in the literature. Hence, we have come up with the following conjecture:
\begin{conjecture}\label{conj:atleast2ptsonedges}
    For any optimal solution of the Heilbronn triangle problem with at least nine points, each edge of the unit square $[0,1]^2$ contains exactly two points. 
\end{conjecture}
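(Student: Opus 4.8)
The plan is to prove the easy half of the conjecture---which sharpens the ``at least one point per edge'' guarantee of Proposition~\ref{onepointoneachedge} to an exact count---and to reduce the hard half to a variational question that appears to remain open for general $n$. The conjecture asserts that each of the four edges of $[0,1]^2$ carries exactly two of the $n$ points; I would bound the per-edge count from above and from below separately. In the configurations we observe, these eight boundary points are distinct and none sits at a corner, which is consistent with (though formally stronger than) the bare edge-count condition.

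For the upper bound (at most two points per edge) the argument is immediate: if three points lay on a common edge, say $x=0$, they would be collinear, the triangle they determine would have zero area, and hence $H_n^* = 0$, contradicting $H_n^* \ge \underline H_n > 0$, where the strictly positive lower bound is produced by the sampling procedure of Section~\ref{sec:boundsHn}. Together with Proposition~\ref{onepointoneachedge}, which already guarantees at least one point per edge for $n \ge 4$, this shows that every edge of an optimal configuration carries one or two points.

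The substantive part is the lower bound---showing that no edge can carry only one point. Here I would argue by contradiction using a first-order (perturbation) analysis of the nonsmooth maximin objective. Suppose some edge, say $x=0$, carries exactly one point $p_1 = (0, \bar y)$, so that $x_i > 0$ for all $i \ne 1$. At a local maximizer of $\min_{i<j<k} \operatorname{area}(\Delta_{ijk})$, the zero vector must lie in the convex hull of the gradients of the active (minimum-area) triangles together with the generators of the normal cone of the box constraints. The key step would be to show that, because $p_1$ is the only point pinned to the line $x=0$, there is a feasible direction---either sliding $p_1$ inward or a coordinated motion of $p_1$ and a neighbouring point---along which every active triangle incident to $p_1$ strictly increases in area while the remaining active triangles do not decrease. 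Such a direction would contradict optimality and force a second point onto the edge. Executing this requires identifying which triangles through $p_1$ are active and checking that their area gradients do not already positively span the blocked direction.

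The hard part, and the reason the statement stands as a conjecture, is that these first-order conditions can be met by spurious one-point-per-edge stationary configurations; ruling them out seems to require either a second-order (global) argument or a finite but delicate case analysis over the combinatorial type of the active-triangle set, whose cardinality grows with $n$. It is indicative that the analogous two-points-on-some-edge results for $n = 7, 8$ (Proposition~\ref{twopointsononeedge}) were established only through heavy symbolic computation, and that for $n = 9$ we are able to certify merely the weaker near-boundary property of Section~\ref{sec:nine-eight-edge}. A plausible route to a full proof in the fixed case $n = 9$ would be to reduce the lower-bound direction to a finite family of polynomial feasibility systems---one per combinatorial configuration of active triangles---and to discharge them by the same certified computation used elsewhere in this paper, while a uniform argument valid for all $n \ge 9$ would remain the main open challenge.
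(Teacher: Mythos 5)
The statement you are asked to prove is one the paper itself explicitly leaves open: the authors write that they are ``unable to give a mathematical proof of this conjecture'' and instead establish only a weaker, $\varepsilon$-relaxed near-boundary property (Proposition~\ref{prop:nine-eight-edge}) for the single case $n=9$, and they do so computationally rather than analytically. Your proposal is therefore correctly calibrated in its honesty. The half you do prove---at most two points per edge, via the observation that three points on a common edge are collinear and would force a zero-area triangle, contradicting $H_n^*>0$---is sound and is exactly the argument the paper uses in the proof of Corollary~\ref{cor:implementingx}(iii) to justify the upper bound of two in constraints~\eqref{eq:x=0-onepoint}--\eqref{eq:x=1-onepoint}. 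Combined with Proposition~\ref{onepointoneachedge} this pins each edge to one or two points, and you correctly isolate the ``no edge carries only one point'' direction as the genuinely open content of the conjecture.

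Where your route diverges from the paper is in how to attack that hard half. Your first-order variational sketch (showing that a lone pinned point admits an area-increasing feasible direction) is not a proof, and you candidly identify why: spurious stationary configurations satisfying the KKT-type conditions cannot be excluded without a second-order or case-by-case analysis, and you do not exhibit the improving direction even for a single combinatorial type of active-triangle set. This is a genuine gap, not a fixable detail. Notably, your fallback suggestion---reduce the $n=9$ case to a finite family of certified computational feasibility checks---is essentially what the paper does, but only for the weaker $\varepsilon$-version: it solves two restricted MIQCPs (``Solution A'' with eight near-boundary points versus ``Solution B'' with at most seven) and shows the certified upper bound of B, namely $0.0546347$, falls strictly below the feasible lower bound of A, namely $0.0548765$. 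That dichotomy argument certifies only that some optimal solution has eight points within $\varepsilon=10^{-2}$ of the boundary, not that each edge contains exactly two points lying on it; closing that residual gap (taking $\varepsilon\to 0$, or handling points near but not on an edge) is precisely what neither you nor the paper achieves, which is why the statement remains a conjecture.
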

Unfortunately, we are unable to give a mathematical proof of this conjecture. Instead, using Gurobi, we prove   a slightly weaker result in Proposition~\ref{prop:nine-eight-edge}, which states that eight points are approximately located near edges in an optimal solution. Note that this result is a strengthening over Proposition~\ref{onepointoneachedge}. 


\begin{prop}\label{prop:nine-eight-edge}

%
Let \(\varepsilon=10^{-2}\). 
For the Heilbronn triangle problem with nine points, there exists an optimal solution that satisfies the following constraints:
\begin{enumerate}[(i)]
    \item $y_1 = 0$ and $y_2 \le \varepsilon $.
    \item  $y_8 \ge 1-\varepsilon $ and $y_9 = 1$.
    \item For some binary  variables $c_{1i},c_{2i}\in\{0,1\}$,  $i=1,\dots,n$, 
    \begin{subequations}
        \begin{align}
        &   \sum_{i=1}^{n} c_{1i} = 2, \  x_i \leq 1+\varepsilon - c_{1i} , \ i =1,\dots,n \label{eq:x=0-points9}\\
        &     \sum_{i=1}^{n} c_{2i} = 2 , \  x_i \geq c_{2i}-\varepsilon , \ i =1,\dots,n  \label{eq:x=1-points9} .
        \end{align}
    \end{subequations}  
\end{enumerate}
\end{prop}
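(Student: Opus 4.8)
The statement is an existence claim about optimal configurations for $n=9$, and since no closed‑form description of the optimum is available in this case, the plan is to certify it computationally by showing that every region violating one of the listed constraints is strictly suboptimal. The starting point is a model that already incorporates the symmetry‑breaking conventions of Proposition~\ref{prop:order-conventions} (in particular the ordering $y_1\le\cdots\le y_n$) together with the boundary conditions of Corollary~\ref{cor:implementingx}. The two equalities $y_1=0$ in item (i) and $y_9=1$ in item (ii) then require no additional argument, as they are immediate from Corollary~\ref{cor:implementingx}(i). What remains is to establish the three genuinely new assertions: a \emph{second} point within $\varepsilon$ of the bottom edge ($y_2\le\varepsilon$), a second point within $\varepsilon$ of the top edge ($y_8\ge 1-\varepsilon$), and a second point within $\varepsilon$ of each vertical edge $x=0$ and $x=1$ (the counting constraints of item (iii)).

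The key idea, which avoids any circular dependence on the as‑yet‑uncertified value $H_9^*$, is to combine a valid \emph{lower} bound with certified \emph{upper} bounds on restricted subproblems. First I would fix a concrete feasible nine‑point configuration (for instance the best sampled configuration used to define $\underline H_9$ in Section~\ref{sec:boundsHn}, or the best‑known configuration from the literature), giving a number $\underline H_9$ with $H_9^*\ge \underline H_9$. Next, for each of the three assertions I would form the subproblem obtained by appending its \emph{negation} to the model and solve it to global optimality with Gurobi, recording the solver's dual (upper) bound $U$. For item (i) the negated region is $y_2\ge\varepsilon$, which—because $y_1\le y_2$ and $y_1=0$—is exactly the scenario in which at most one point lies within $\varepsilon$ of the bottom edge; item (ii) is the analogous region $y_8\le 1-\varepsilon$. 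For item (iii) the negated region is ``at most one point has $x_i\le\varepsilon$'' (respectively ``$x_i\ge 1-\varepsilon$''), which I would encode with auxiliary binary indicators, or equivalently by enumerating which single index is permitted to be near the edge. If each such subproblem returns a certified bound $U<\underline H_9$, then no optimal configuration can lie in any of the negated regions; since an optimum exists by compactness, every optimum of the symmetry‑reduced model—hence, in particular, some optimum of the original problem—satisfies all of the stated constraints simultaneously.

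Two simplifications streamline the work. The reflections $y\mapsto 1-y$ and $x\mapsto 1-x$ are symmetries of the problem, so the subproblem for the top edge is equivalent to that for the bottom edge and the subproblem for $x=1$ is equivalent to that for $x=0$; in principle only two of the subproblems need to be solved from scratch. Moreover, the comparatively large tolerance $\varepsilon=10^{-2}$ is chosen precisely so that the suboptimality gap $\underline H_9-U$ is comfortably positive and robust to solver tolerances.

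The main obstacle is computational rather than conceptual: each negated subproblem is itself a non‑convex MIQCP/QCP, and certifying the claim requires a \emph{globally valid} dual bound $U$ strictly below $\underline H_9$, not merely a good incumbent. Driving Gurobi's spatial branch‑and‑bound to close the gap to this level is expensive—this is where much of the reported computational effort is spent—and the extra binary indicators needed to model ``at most one point near an edge'' further enlarge the search tree. A secondary concern is to formulate each negated region as a closed superset of the true complement (e.g.\ using $y_2\ge\varepsilon$ to cover $y_2>\varepsilon$), so that the resulting certificate remains rigorous.
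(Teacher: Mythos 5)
Your proposal matches the paper's proof in its essential strategy: both certify the claim computationally by exhibiting a feasible configuration whose value (a valid lower bound on $H_9^*$) strictly exceeds a certified global upper bound for the restricted subproblem(s) in which the near-boundary pattern fails, exploiting the square's symmetries to reduce the number of negated cases and relying on the comparatively large $\varepsilon$ to keep the resulting gap robust to solver tolerances. The only difference is cosmetic, in how the complement is decomposed: the paper bundles all failure modes into a single ``at most seven near-boundary points'' subproblem (Solution~B, certified upper bound $0.0546347$, against the Solution-A incumbent $0.0548765$), whereas you propose solving one negated subproblem per edge before applying the same symmetry reduction.
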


\begin{proof}
We compare two types of solutions as below: 


\textbf{Solution A: Exactly eight points near edges.}
Let us solve a restricted version of the Heilbronn triangle problem with nine points in which we assume that Conjecture~\ref{conj:atleast2ptsonedges} holds true. In particular, we enforce the following restrictions:  
\[
y_1=y_2=0, \ y_8=y_9=1, \ \sum_{i=1}^{9} c_{1i}= 
\sum_{i=1}^{9} c_{2i} = 2, 
c_{2i} \le x_i \le 1 - c_{1i}
,\ i=1,\dots,9.
\]
Using Approach 1, augmented with   other enhancements, Gurobi finds a feasible solution with an objective value of
\(L_A=\underline H_9=\texttt{0.0548765299}\) in \(317.611\) seconds. 

\textbf{Solution B: At most seven points near edges.}
Let us solve another restricted version of the Heilbronn triangle problem with nine points in which we assume at most seven points can be near edges, which is guaranteed that there exists a point whose distance to all edges is at least $\varepsilon$. In particular, we enforce the following restrictions for this purpose, $i=1,\dots,n$ with $n=9$:


\[
\begin{aligned}
&y_1 = 0   && \text{(at least  one point  on  }  y=0) \\
&y_9 = 1 && \text{(at least  one point  on  }  y=1)  \\[6pt]
&x_i \;\le\; 1 - c_{1i} &
            &\text{(forces }x_i=0\text{ if }c_{1i}=1) \\[4pt]
&x_i \;\ge\; c_{2i} &
            &\text{(forces }x_i=1\text{ if }c_{2i}=1) \\[4pt]
&x_i \;\ge\; \varepsilon\,(1 - c_{1i}) &
            &\text{(otherwise }x_i \ge \varepsilon\text{ away from }x=0) \\[4pt]
&x_i \;\le\; 1 - \varepsilon\,(1 - c_{2i}) &
            &\text{(otherwise }x_i \le 1-\varepsilon\text{ away from }x=1) \\[8pt]
&1 \;\le\; \sum_{i=1}^{n} c_{1i} \;\le\; 2 &
            &\text{(one or two points within }\varepsilon\text{ of }x=0) \\[4pt]
&\sum_{i=1}^{n} c_{2i} \;=\; 1. &
            &\text{(exactly one point within }\varepsilon\text{ of }x=1)
\end{aligned}
\]

Hence, at most two points may lie within $\varepsilon$ of each of the edges $y=0$, $y=1$, and $x=0$, while at most one point may lie within $\varepsilon$ of the edge $x=1$, giving a maximum of $2+2+2+1=7$ near-boundary points in total.

Using Approach 1, adapted with the Solution B constraints and all other proven enhancements,
Gurobi is able to provide a certified upper bound of
\(
U_B \;=\; \overline H_9 \;=\; \texttt{0.0546346906} 
\)
in \(82\ 359.163\) seconds.

Notice that the
gap between two bounds $L_A$ and $U_B$ is computed as 
\[
\lvert L_A-U_B\rvert
=0.0548765299-0.0546346906
=0.0002418393 \;>\; 0.
\]
Hence, we deduce that a globally optimum  solution of the Heilbronn triangle problem with nine points cannot be of type B. Hence, it must be of type A. 

Note that no edge can host three points within distance \(\varepsilon\): such a triple would form a triangle of area at most \(\varepsilon/2 = 5\times10^{-3}\), far below the certified lower bounds, so the model remains valid while capping the total number of near-boundary points at seven.
\end{proof}
Notice that  equations in Item (i) (resp. Item (ii)) imply that two points must be near the edge $y=0$ (resp. $y=1$) while equations in Item (iii) imply that two points must be near edge $x=0$ (equations~\eqref{eq:x=0-points9} and $x=1$ (equations~\eqref{eq:x=1-points9}) each.

\begin{rem}
In the proof of Proposition~\ref{prop:nine-eight-edge}, we also tried   a lighter alternative for Solution B in which we replaced the horizontal near-edge logic by the single inequality \(y_{2}\ge \varepsilon\). This enforces exactly one point on \(y=0\) (since \(y_1=0\)) while keeping the same vertical near-edge controls as above, and it permits up to two points near each vertical edge and one point on \(y=1\). Although this variant is syntactically simpler, in our experiments it was {less time-efficient} than the Scenario~B formulation stated above.
\end{rem}

\subsection{Third Group: Local Packing and Separation }

\subsubsection{Rectangles with at Most Two Points} \label{rec2point}

The following enhancement is based on a simple geometric observation about
feasible configurations of the Heilbronn triangle problem: In any optimal solution, a rectangle whose area is less than $2\underline{H}_n$ cannot contain three points, where $\underline{H}_n \le H_n^*$. 
We first
formalize this observation in Proposition~\ref{prop:two-per-rectangle-general}, 
and then show how we exploit it in
our formulation via a particular family of rectangles in Corollary~\ref{cor:two-per-rectangle-encoding}.

\begin{prop}\label{prop:two-per-rectangle-general}
Consider   $\underline{H}_n \le H_n^*$ and
let   $R \subset [0,1]^2$ be a rectangle whose area is less than $2\underline{H}_n$.
Then, in any optimal solution of the    
Heilbronn triangle problem with $n$ points, 
 $R$ contains at most two points.
\end{prop}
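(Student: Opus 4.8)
The plan is to argue by contradiction, combining a purely geometric lemma that bounds the area of any triangle inscribed in a rectangle with the defining property of an optimal Heilbronn configuration. The crux is the elementary fact that a triangle whose three vertices lie in a rectangle of area $A$ has area at most $A/2$; once this is available, the conclusion is immediate.

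First I would establish the geometric lemma. Let $R$ be the rectangle and suppose three points $P_1,P_2,P_3 \in R$ are given. Viewing the signed area $\tfrac12\,(P_2-P_1)\times(P_3-P_1)$ as a function of $P_3$ with $P_1,P_2$ fixed, this expression is affine in $P_3$, so its absolute value—and hence the triangle area—is maximized over the convex set $R$ at a vertex of $R$. Iterating the same argument for $P_2$ and then $P_1$ shows that the maximum-area triangle with vertices in $R$ must have all three vertices at corners of $R$. Since each of the $\binom{4}{3}=4$ corner triangles of a rectangle has area exactly $A/2$, every triangle with vertices in $R$ has area at most $A/2$. I would remark that the naive estimate using $|x_i-x_j|\le w$ and $|y_i-y_j|\le h$ only yields $\operatorname{area}\le wh = A$, so this extreme-point reduction, rather than a direct bound, is the essential ingredient.

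With the lemma in hand, I would suppose for contradiction that in some optimal solution the rectangle $R$ contains three or more of the $n$ points. Choosing any three of them produces a triangle $T$ whose vertices all lie in $R$, so by the lemma $\operatorname{area}(T)\le \operatorname{area}(R)/2 < (2\underline{H}_n)/2 = \underline{H}_n \le H_n^*$. On the other hand, in an optimal solution the minimum triangle area equals $H_n^*$, so \emph{every} triangle formed by the points has area at least $H_n^*$. This contradicts $\operatorname{area}(T) < H_n^*$, and therefore $R$ can contain at most two points, as claimed.

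I expect the main obstacle to be the geometric lemma, and specifically the justification of the sharp constant $1/2$. The one subtlety is that the straightforward coordinate bound only delivers $A$, which is too weak for the argument; the linearity and extreme-point reduction to corner triangles is precisely what sharpens the bound to $A/2$, and stating it cleanly is the single place where some care is needed. The remainder is a routine contradiction using $\underline{H}_n \le H_n^*$ and the optimality of the configuration.
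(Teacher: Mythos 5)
Your proof is correct and follows essentially the same route as the paper: assume three points of an optimal configuration lie in $R$, bound the resulting triangle's area by $\tfrac12\operatorname{area}(R) < \underline{H}_n \le H_n^*$, and derive a contradiction with optimality. The only difference is that you carefully justify the half-area bound via the affine/extreme-point reduction to corner triangles, whereas the paper simply asserts that the triangle area is at most $\tfrac12\operatorname{area}(R)$; your added lemma is a correct and welcome filling-in of that step.
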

\begin{proof}
Let $(x^*,y^*)\in[0,1]^n\times[0,1]^n$ be an  optimal solution of the    
Heilbronn triangle problem with $n$ points. 
Assume, for contradiction, that $R$ contains  points $i$, $j$ and $k$. 
Then, the area of the triangle formed by these points is at most $\tfrac12\,\mathrm{area}(R) < \underline{H}_n$. However, this contradicts the optimality of  $(x^*,y^*)$ since the area of this triangle must be at least $ \underline{H}_n$.
\end{proof}
%




We  now explain how we utilize Proposition~\ref{prop:two-per-rectangle-general} to strengthen our formulations for the Heilbronn triangle problem. In particular, we  partition   the unit square into horizontal strips whose height is less than   $2\underline{H}_n$. Then, by the introduction of a new set of binary variables $r_{ij}$, we enforce that i) each strip can contain at most two points and ii) each point must be contained in exactly one strip. \Cref{fig:two_per_rectangle} illustrates this partitioning strategy and Corollary~\ref{cor:two-per-rectangle-encoding} formalizes this argument.

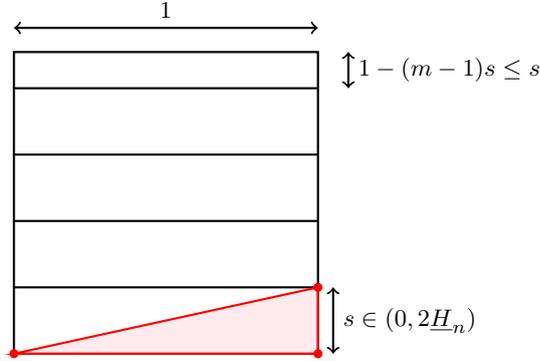
\begin{figure}[h!]
\centering
\begin{tikzpicture}[scale=4, thick]

  \def\s{0.22}
  \pgfmathsetmacro{\yone}{\s}
  \pgfmathsetmacro{\ytwo}{2*\s}
  \pgfmathsetmacro{\ythree}{3*\s}
  \pgfmathsetmacro{\yfour}{4*\s}
  \pgfmathsetmacro{\topH}{1-\yfour}

  \draw (0,0) rectangle (1,1);
  \draw (0,\yone)   -- (1,\yone);
  \draw (0,\ytwo)   -- (1,\ytwo);
  \draw (0,\ythree) -- (1,\ythree);
  \draw (0,\yfour)  -- (1,\yfour);

  \coordinate (Pleft)  at (0, 0);
  \coordinate (PbotR)  at (1, 0);
  \coordinate (PtopR)  at (1, \s);

  \fill[red!25,opacity=.30] (Pleft) -- (PbotR) -- (PtopR) -- cycle;
  \draw[red]                (Pleft) -- (PbotR) -- (PtopR) -- cycle;

  \foreach \pt in {Pleft,PbotR,PtopR}{
      \fill[red] (\pt) circle (0.015);
  }

  \draw[<->] (0,1.08) -- (1,1.08);
  \node[above, font=\scriptsize] at (0.5,1.08) {$1$};

  \draw[<->] (1.05,0) -- (1.05,\s);
  \node[right, font=\scriptsize]
        at (1.05,0.5*\s)
        {$\displaystyle s \in (0,2\underline{H}_n)$};

  \draw[<->] (1.10,\yfour) -- (1.10,1);
  \node[right, font=\scriptsize]
        at (1.10,\yfour + 0.5*\topH)
        {$\displaystyle 1-(m-1)s \le s$};

\end{tikzpicture}
\vspace{0.4em}
\\
\caption{The unit square partitioned into four strips of height $s$ and a final strip of height $1-(m-1)s<s$. Each strip has area less than $2\underline{H}_n$, so it can contain at most two points.}
\label{fig:two_per_rectangle}
\end{figure}

\begin{cor}\label{cor:two-per-rectangle-encoding}
Consider   $\underline{H}_n \le H_n^*$ and 
   $  s \in (0,   2\underline{H}_n)$.
Let $m$ be the smallest integer such that $(m-1)s < 1 \le m s$ and partition the unit interval $[0,1]$ into $m$ subintervals
$[L_p, U_p]$ such that $L_p=(p-1)s$ and $U_p=\min\{1,ps\}$, for $p=1,\dots,m$.
Then, there exists an optimal solution to the Heilbronn triangle problem that satisfies the following constraints for some binary variables  $    r_{pi} \in \{0,1\}$, $p=1,\dots,m, i=1,\dots,n  $: 
\begin{enumerate}[(i)]
    \item $\sum_{i=1}^{n} r_{pi} \le 2,\quad p=1,\dots,m$.
    \item  $\sum_{i=1}^{m} r_{pi} = 1,\quad i=1,\dots,n$.
    \item 
    $  L_p - (1 - r_{pi}) \le 
y_i \le U_p + (1 - r_{pi}) \quad p=1,\dots,m, i=1,\dots,n  $.
\end{enumerate}

\end{cor}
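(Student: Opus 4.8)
The plan is to start from any optimal configuration $(x^*,y^*)$ of the $n$-point problem and to \emph{construct} the auxiliary binary variables $r_{pi}$ by hand, then verify items (i)--(iii) directly. The key structural input is Proposition~\ref{prop:two-per-rectangle-general}: each strip is the rectangle $[0,1]\times[L_p,U_p]$ of width $1$ and height $U_p-L_p\le s<2\underline{H}_n$, so its area is strictly below $2\underline{H}_n$, and hence it contains at most two of the points $(x_i^*,y_i^*)$. This observation is exactly what delivers item (i), provided the $r_{pi}$ faithfully record which strip each point falls into.

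Next I would define the assignment. Because the intervals $[L_p,U_p]$ cover $[0,1]$ (this is the role of choosing $m$ as the smallest integer with $(m-1)s<1\le ms$), every coordinate $y_i^*$ lies in at least one strip; I would set $r_{pi}=1$ for one such strip and $r_{pi}=0$ otherwise. This immediately yields item (ii), that each point is assigned to exactly one strip. Moreover, for the chosen strip we have $L_p\le y_i^*\le U_p$, which is precisely the $r_{pi}=1$ branch of the two-sided inequality in item (iii). Item (i) then follows quantitatively, since the number of indices $i$ with $r_{pi}=1$ equals the number of points lying in strip $p$, which is at most two by the paragraph above.

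It remains to check the $r_{pi}=0$ branch of item (iii), where the inequalities read $L_p-1\le y_i^*\le U_p+1$. Here I would use that $y_i^*\in[0,1]$ together with $0\le L_p\le(m-1)s<1$ and $0\le U_p\le 1$: the left inequality holds because $L_p-1<0\le y_i^*$, and the right because $y_i^*\le 1\le U_p+1$. Thus the unit ``big-$M$'' constant deactivates the constraint exactly when $r_{pi}=0$, so every proposed constraint is satisfied by the constructed solution without altering its objective value.

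This is essentially a routine exact-encoding verification, so the only points requiring care are (a) the tie-breaking for a point sitting on a shared endpoint $y_i^*=ps=U_p=L_{p+1}$, where I would simply commit to one of the two admissible strips so that item (ii) remains an equality; and (b) confirming that the single unit big-$M$ truly dominates the range of $y_i$ together with the strip endpoints, which is what the inequalities above establish. I do not anticipate a genuine obstacle beyond this bookkeeping, since the substantive geometric content is entirely carried by Proposition~\ref{prop:two-per-rectangle-general}.
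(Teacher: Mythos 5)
Your proposal is correct and follows essentially the same route as the paper, which justifies Corollary~\ref{cor:two-per-rectangle-encoding} by partitioning $[0,1]^2$ into horizontal strips of area at most $s<2\underline{H}_n$, invoking Proposition~\ref{prop:two-per-rectangle-general} to cap each strip at two points, and letting the binaries $r_{pi}$ record the strip assignment. Your explicit verification of the big-$M$ deactivation in item (iii) and the tie-breaking at shared strip endpoints is just the bookkeeping the paper leaves implicit.
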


In our computational experiments, we instantiate Corollary~\ref{cor:two-per-rectangle-encoding} with the specific
choice of 
\(
s \;=\; 2\underline{H}_n -  10^{-6}
\). 
This choice guarantees that the tolerance is  small enough to be negligible compared to
$\underline{H}_n$, but large enough to prevent round-off errors from making the strip
area equal to or slightly larger than $2\underline{H}_n$. 

\begin{rem}\label{Remark3}
Proposition~\ref{prop:two-per-rectangle-general} suggests that one could, in
principle, strengthen Corollary~\ref{cor:two-per-rectangle-encoding} by
introducing additional families of rectangles so as to cover more (or even all)
rectangles in $[0,1]^2$ whose area is below the threshold $2\underline{H}_n$.
For example, one natural extension would be to replicate the construction of
Corollary~\ref{cor:two-per-rectangle-encoding} with vertical strips, or to add
further rotated or non-uniform rectangles, each with its own set of binary
assignment variables and capacity constraints. However, when we implemented the
same strip-based encoding also for vertical rectangles, we quickly observed that
the additional binary variables substantially increased the model size and
solution time, while providing only marginal tightening in practice. On the
basis of these preliminary experiments, we therefore chose to restrict this
enhancement to the horizontal strips used in
Corollary~\ref{cor:two-per-rectangle-encoding}.
\end{rem}

\subsubsection{Small Squares Containing at Most One Point}

The second enhancement is based on  another geometric intuition: In any optimal solution, a square whose side is less than $\underline{H}_n$ cannot contain two points, where $\underline{H}_n \le H_n^*$ (otherwise, those
two points together with any third point in $[0,1]^2$ would form a triangle
whose area is necessarily below the given lower bound $\underline{H}_n$). 
We first
formalize this observation in Proposition~\ref{prop:one-per-square-general}, 
and then show how we exploit it in
our formulation via a particular family of squares in Corollary~\ref{cor:one-per-subsquare-encoding}.


\begin{prop}\label{prop:one-per-square-general}
Consider   $\underline{H}_n \le H_n^*$ and
let   $R \subset [0,1]^2$ be a square whose side is less than $ \underline{H}_n$.
Then, in any optimal solution of the    
Heilbronn triangle problem with $n$ points, 
 $R$ contains at most one point.
\end{prop}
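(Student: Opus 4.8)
The plan is to argue by contradiction, mirroring the proof of Proposition~\ref{prop:two-per-rectangle-general} but replacing its ``three points in a low-area rectangle'' step with a ``two points in a small-side square'' step. Let $(x^*,y^*)$ denote an optimal solution, and suppose toward a contradiction that the square $R$, whose side length I denote by $s<\underline{H}_n$, contains two of the points, say $i$ and $j$. Since the problem has $n>2$ points, a third point $k$ of the configuration always exists, with $k\in[0,1]^2$.

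The core of the argument is a purely geometric upper bound on the area of the triangle $ijk$. I would write this area in base--height form as $\tfrac12\,\lvert ij\rvert\, h$, where $\lvert ij\rvert$ is the length of the segment $ij$ and $h$ is the perpendicular distance from $k$ to the line through $i$ and $j$. Two elementary bounds then suffice: first, since $i$ and $j$ both lie in a square of side $s$, the base obeys $\lvert ij\rvert\le s\sqrt2$ (the diagonal of $R$); second, since $k$ and $i$ both lie in the unit square and $i$ is on the line, the height obeys $h\le \lvert k-i\rvert\le\sqrt2$ (the diameter of $[0,1]^2$). Multiplying, the triangle area is at most $\tfrac12\cdot s\sqrt2\cdot\sqrt2=s$.

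With this bound in hand the contradiction is immediate: the triangle $ijk$ has area at most $s<\underline{H}_n\le H_n^*$, whereas in an optimal configuration every triangle has area at least $H_n^*$. Hence $R$ cannot contain two points, which proves the claim. This is exactly the intuition flagged in the text preceding the statement, namely that two points in such a small square, together with any third point, necessarily span a triangle of area below $\underline{H}_n$.

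The step I expect to require the most care is the geometric bound, and in particular checking that the hypothesis ``side $<\underline{H}_n$'' (rather than some smaller multiple of it) is exactly what is needed. The pleasant cancellation $\tfrac12\cdot\sqrt2\cdot\sqrt2=1$ is what lets the side bound $s$ translate directly into the area bound $s$, so I would verify that the diagonal bound on the base and the diameter bound on the height are both legitimate upper bounds used \emph{simultaneously} (they need not be attained together, since I only use them to form a valid, if loose, upper envelope on the area). Everything else---the existence of a third point and the optimality inequality $\operatorname{area}(ijk)\ge H_n^*\ge\underline{H}_n$---is routine, precisely as in Proposition~\ref{prop:two-per-rectangle-general}.
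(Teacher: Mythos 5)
Your argument is correct and is essentially identical to the paper's proof: both bound the base by the diagonal $s\sqrt{2}$ of the small square and the height by the diameter $\sqrt{2}$ of the unit square, giving a triangle area below $\underline{H}_n \le H_n^*$ and hence a contradiction with optimality. Your write-up is slightly more explicit about the base--height decomposition and about why the two bounds may be used simultaneously, but the substance is the same.
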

\begin{proof}
Let $(x^*,y^*)\in[0,1]^n\times[0,1]^n$ be an  optimal solution of the    
Heilbronn triangle problem with $n$ points. 
Assume, for contradiction, that $R$ contains  points $i$ and and $j$. Observe that the distance between these points is   less than $\underline{H}_n \sqrt{2}$.
Now, consider a   point  $(\tilde x, \tilde y)\in[0,1]^2$. Notice that the area of triangle with corners $(x_i^*,y_i^*)$, $(x_j^*,y_j^*)$ and $(\tilde x, \tilde y)$ is less than $\frac12  (\underline{H}_n \sqrt{2}) \sqrt{2} =\underline{H}_n $ (since the perpendicular distance of the point $(\tilde x, \tilde y)$ to the line segment formed by points $(x_i^*,y_i^*)$ and $(x_j^*,y_j^*)$ is at most $\sqrt{2}$).   However, this contradicts the optimality of  $(x^*,y^*)$ since the area of this triangle must be at least $ \underline{H}_n$.
\end{proof}


We  again explain how we utilize Proposition~\ref{prop:two-per-rectangle-general} to strengthen our formulations for the Heilbronn triangle problem. In particular, we  partition   the unit square into a regular $m\times m$ grid of axis-aligned
subsquares of side length $1/m < \underline{H}_n$. Then, by the introduction of a new set of binary variables $u_{pqi}$, we enforce that i) each subsquare can contain at most one point  and ii) each point must be contained in exactly one subsquare. \Cref{fig:one_per_subsquare} illustrates this partitioning strategy and Corollary~\ref{cor:one-per-subsquare-encoding} formalizes this argument.

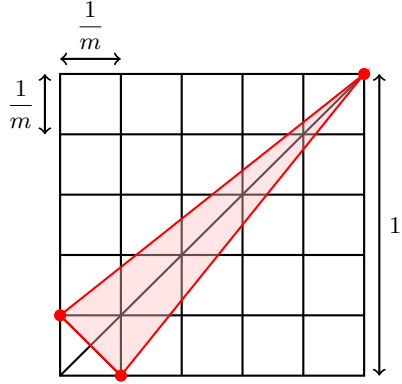
\begin{figure}[h!]
\centering
\begin{tikzpicture}[scale=4, thick]

  \def\N{5}                     
  \pgfmathsetmacro{\h}{1/\N}    

  \draw (0,0) rectangle (1,1);
  \foreach \k in {1,...,4}{
     \draw (0,\k*\h) -- (1,\k*\h);   
     \draw (\k*\h,0) -- (\k*\h,1);   
  }

  \draw (0,0) -- (1,1);        
  \draw (0,\h) -- (\h,0);      

  \coordinate (A) at (1,1);
  \coordinate (B) at (\h,0);
  \coordinate (C) at (0,\h);

  \begin{scope}
    \clip (-0.02,-0.02) rectangle (1.02,1.02);
    \fill[red!25,opacity=.40] (A)--(B)--(C)--cycle;
    \draw[red] (A)--(B)--(C)--cycle;
    \foreach \pt in {A,B,C}{\fill[red] (\pt) circle(0.02);}
  \end{scope}

  \draw[<->] (0,1.05) -- (\h,1.05);
  \node[above, font=\scriptsize] at (\h/2,1.05) {$\dfrac{1}{m}$};

  \draw[<->] (-0.05,1) -- (-0.05,1-\h);
  \node[left, font=\scriptsize]  at (-0.05,1-\h/2) {$\dfrac{1}{m}$};

  \draw[<->] (1.05,0) -- (1.05,1);
  \node[right, font=\scriptsize] at (1.05,0.5) {$1$};

\end{tikzpicture}
\caption{Partitioning the unit square into an $m \times m$ grid. If $1/m < \underline{H}$, then each subsquare can contain at most one point.}
\label{fig:one_per_subsquare}
\end{figure}

\begin{cor}\label{cor:one-per-subsquare-encoding}
Consider   $\underline{H}_n \le H_n^*$
and let $m\ge 1$ be an integer such that $1/m < \underline{H}_n$. 
Equipartition the unit interval $[0,1]$ into $m$ subintervals
$[L_p, U_p]$ such that $L_p=\frac{p-1}{m}$ and $U_p=\frac{p}{m}$, for $p=1,\dots,m$. 
Then, there exists an optimal solution to the Heilbronn triangle problem that satisfies the following constraints for some binary variables  $    u_{pqi} \in \{0,1\}$, $p=1,\dots,m,  q=1,\dots,m, i=1,\dots,n  $: 
\begin{enumerate}[(i)]
    \item $\sum_{i=1}^{n} u_{pqi} \le 1, \quad p=1,\dots,m , q=1,\dots,m$.
    \item $\sum_{p=1}^{m} \sum_{q=1}^{m} u_{pqi} = 1 , \quad i=1,\dots,n$.
    \item 
    $  L_p - (1 - u_{pqi}) \le 
x_i \le U_p + (1 - u_{pqi}), \ L_q - (1 - u_{pqi}) \le 
x_i \le U_q + (1 - u_{pqi}) , \quad  p,q=1,\dots,m, i=1,\dots,n  $.
\end{enumerate}
\end{cor}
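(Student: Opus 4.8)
The plan is to argue directly from Proposition~\ref{prop:one-per-square-general}, combined with a careful bookkeeping of which subsquare each point is assigned to. First I would fix an optimal configuration $(x^*,y^*)$ of the Heilbronn triangle problem with $n$ points; such a configuration exists because the feasible set $[0,1]^n\times[0,1]^n$ is compact and the minimum-triangle-area objective is continuous. Since $L_p=(p-1)/m$ and $U_p=p/m$, each cell $[L_p,U_p]\times[L_q,U_q]$ of the grid is an axis-aligned square of side $1/m<\underline{H}_n\le H_n^*$, so Proposition~\ref{prop:one-per-square-general} applies verbatim to every cell and guarantees that each closed cell contains at most one of the points.

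Next I would define the assignment variables $u_{pqi}$. For each point $i$, the coordinates $(x_i^*,y_i^*)$ lie in at least one closed cell; if the point is interior to a cell this cell is unique, and if it lies on a grid line it belongs to two or four adjacent closed cells. In the latter case I would fix a deterministic tie-breaking rule (for instance, always assign a boundary point to the cell with the lexicographically smallest index pair $(p,q)$), and then set $u_{pqi}=1$ for the chosen cell and $u_{pqi}=0$ otherwise. By construction every point is assigned to exactly one cell, which is precisely constraint~(ii).

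I would then verify constraint~(i): if two distinct points $i\neq i'$ were both assigned to the same cell $(p,q)$, they would both lie in that closed square, contradicting the at-most-one conclusion of Proposition~\ref{prop:one-per-square-general}; hence $\sum_{i}u_{pqi}\le 1$ for every $(p,q)$. Finally I would check the big-$M$ inequalities~(iii) with constant $1$. Whenever $u_{pqi}=1$ they reduce to $L_p\le x_i^*\le U_p$ and $L_q\le y_i^*\le U_q$, which hold because point $i$ was assigned to the closed cell $(p,q)$; whenever $u_{pqi}=0$ they become $L_p-1\le x_i^*\le U_p+1$ (and likewise in $y$), which are automatically satisfied since $x_i^*,y_i^*\in[0,1]$ while $L_p,U_p\in[0,1]$, so $1$ is a valid big-$M$.

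The computation is elementary, so the only point that genuinely requires care—and the step I would flag as the main obstacle—is the treatment of points lying exactly on the grid lines, where a single point is geometrically contained in more than one closed cell. The tie-breaking convention resolves the resulting ambiguity needed for constraint~(ii), and the key observation is that it cannot induce a violation of constraint~(i): even though a boundary point touches several cells, Proposition~\ref{prop:one-per-square-general} forbids any other point from sharing any of those closed cells, so no cell ever receives two assigned points regardless of how the ties are broken.
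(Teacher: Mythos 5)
Your proof is correct and follows essentially the same route the paper intends: the corollary is stated without an explicit proof, but the preceding discussion makes clear it is meant to follow by applying Proposition~\ref{prop:one-per-square-general} to each cell of the regular $m\times m$ grid and encoding the resulting assignment with the binary variables $u_{pqi}$, which is exactly what you do. Your explicit tie-breaking for points on grid lines (and the observation that it cannot violate constraint~(i) because the proposition already excludes a second point from any closed cell of side $1/m<\underline{H}_n$) is a detail the paper glosses over, and you correctly read the second chain of inequalities in~(iii) as constraining $y_i$ rather than $x_i$, which is evidently a typo in the statement.
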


In our computational experiments, we instantiate Corollary~\ref{cor:one-per-subsquare-encoding} with the specific choice of 
\(
m= \left\lfloor \frac{1}{\underline{H}_n -10^{-6} }  \right\rfloor 
\).

\begin{rem}
Proposition~\ref{prop:one-per-square-general} suggests that one could, in
principle, strengthen Corollary~\ref{cor:one-per-subsquare-encoding} by adding
further families of small squares in $[0,1]^2$, together with corresponding
binary assignment variables and capacity constraints, so as to approximate the
full geometric condition for all admissible squares. However, in our
computational experiments (see Remark~\ref{Remark3}) we observed that each such
extension substantially increases the number of binary variables and the overall
model size, which in turn led to longer solution times and no commensurate
practical benefit. For this reason, we restrict this enhancement to the single
regular $m\times m$ grid of axis-aligned subsquares used in
Corollary~\ref{cor:one-per-subsquare-encoding}, which already leverages
Proposition~\ref{prop:one-per-square-general} in a numerically effective way.
\end{rem}

\subsubsection{Applying the Heilbronn Triangle Problem in Smaller Rectangles}
In Section \ref{rec2point}, we established bounds by partitioning the square into strips small enough to contain at most two points. We now generalize this approach to larger rectangles (strips of greater height) that can contain three or more points. The core idea is to determine the \emph{capacity} of a horizontal strip of a given height. If we cannot place $m$ points into a strip without forcing the minimum triangle area to fall below our lower bound of $\underline{H}_n$, then that strip is   constrained to contain at most $m-1$ points. 


Since we have  $y_1 \le \dots \le y_n$ due to Proposition~\ref{prop:order-conventions}(i), determining the capacity of a bottom strip $[0, 1] \times [0, 1/\kappa]$ imposes explicit  bounds on some of the $y_i$ variables (here, $\kappa$ is the number of strips that equipartition vertical edge of the unit square). Specifically, if a strip of height $1/\kappa$ can contain at most $m$ points, then the $(m+1)$-th point must lie strictly above $1/\kappa$ (i.e., $y_{m+1} \ge 1/\kappa$). By systematically applying this logic to partitions of the unit square, we derive rigorous variable bounds that significantly prune the search space \emph{a priori}. Figure~\ref{fig:smaller_rectangles} illustrates the concept of solving the Heilbronn problem within such a restricted domain.

\begin{figure}[h!]
\centering
\begin{tikzpicture}[scale=4, thick]

  \def\x{3}                          
  \pgfmathsetmacro{\hrect}{1/\x}     

  \draw (0,0) rectangle (1,1);

  \draw (0,0) rectangle (1,\hrect);

  \coordinate (Q1) at (0.3339240927352503, 0.0);
  \coordinate (Q2) at (1.0, 0.0);
  \coordinate (Q3) at (0.0, 0.5778621196377186*\hrect);
  \coordinate (Q4) at (1.0, 0.6660759072647495*\hrect);
  \coordinate (Q5) at (0.4221376453891849, \hrect);

  \begin{scope}
    \clip (0,0) rectangle (1,1); 

    \draw[black!60] (Q1)--(Q2)--(Q3)--cycle;
    \draw[black!60] (Q1)--(Q2)--(Q4)--cycle;
    \draw[black!60] (Q1)--(Q2)--(Q5)--cycle;
    \draw[black!60] (Q1)--(Q3)--(Q4)--cycle;
    \draw[black!60] (Q1)--(Q3)--(Q5)--cycle; 
    \draw[black!60] (Q1)--(Q4)--(Q5)--cycle;
    \draw[black!60] (Q2)--(Q3)--(Q4)--cycle;
    \draw[black!60] (Q2)--(Q3)--(Q5)--cycle;
    \draw[black!60] (Q2)--(Q4)--(Q5)--cycle;
    \draw[black!60] (Q3)--(Q4)--(Q5)--cycle;

    \fill[red!25,opacity=.40] (Q1)--(Q3)--(Q5)--cycle;
    \draw[red]                (Q1)--(Q3)--(Q5)--cycle;
  \end{scope}

  \foreach \pt in {Q1,Q2,Q3,Q4,Q5}{\fill[black] (\pt) circle(0.02);}
  \foreach \pt in {Q1,Q3,Q5}{\fill[red] (\pt) circle(0.02);}

  \draw[<->] (1.05,0) -- (1.05,\hrect);
  \node[right, font=\scriptsize] at (1.05,0.5*\hrect) {$\dfrac{1}{\kappa}$};

  \draw[<->] (0,-0.07) -- (1,-0.07);
  \node[below, font=\scriptsize] at (0.5,-0.07) {$1$};

\end{tikzpicture}
\caption{Applying the Heilbronn triangle problem in smaller rectangles to derive cardinality constraints. The height of the restricted strip is $1/\kappa$, where $\kappa$ is the number of strips.} 
\label{fig:smaller_rectangles}
\end{figure}
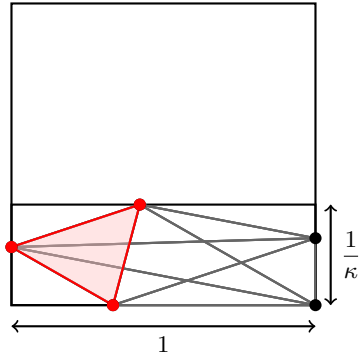

\begin{prop}\label{prop:strip-bounds}
Fix \(\kappa\ge 2\) and partition the unit square into \(\kappa\) equal horizontal strips of height \(1/\kappa\).
Let \(m_\kappa\) be the maximum number of points that can be placed in a strip with height $1/\kappa$ and width 1 such that the minimum area of any triangle formed by these points is at least \(\underline{H}_n\).
Then, there exists an optimal solution to the Heilbronn triangle problem that satisfies the following constraints: 
\[
y_{\,\ell\,m_\kappa+1}\ \ge\ \frac{\ell}{\kappa}
\qquad\text{and}\qquad
y_{\,n-\ell\,m_\kappa}\ \le\ 1-\frac{\ell}{\kappa}, \text{ for } \ell= 1,\dots,\kappa-1 .
\]
\end{prop}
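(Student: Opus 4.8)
The plan is to begin from an optimal solution already sorted by Proposition~\ref{prop:order-conventions}(i), so that $y_1 \le y_2 \le \cdots \le y_n$, and then to convert a purely combinatorial bound on how many points can occupy a bottom rectangle into an explicit bound on the relevant $y_i$. The upper bounds will come out of the same counting applied to the top strips (equivalently, from the area-preserving reflection $(x,y)\mapsto(x,1-y)$).

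The central ingredient is a counting lemma: in the fixed optimal solution, the rectangle $[0,1]\times[0,\ell/\kappa]$ contains at most $\ell m_\kappa$ points, for each $\ell\in\{1,\dots,\kappa-1\}$. To establish it, I would subdivide this rectangle into its $\ell$ constituent strips $[0,1]\times[(j-1)/\kappa,\,j/\kappa]$, $j=1,\dots,\ell$, each of which is a translate of the height-$1/\kappa$, width-$1$ strip used to define $m_\kappa$ (translation preserves all triangle areas). Any three points of the optimal configuration that happen to lie in a single such strip form a triangle of area at least $H_n^*\ge \underline H_n$, by optimality together with $\underline H_n\le H_n^*$. Hence the subset of optimal points falling inside one strip is itself a feasible placement for the extremal problem defining $m_\kappa$, so it has at most $m_\kappa$ points; summing over the $\ell$ strips yields at most $\ell m_\kappa$ points in the rectangle.

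Given the counting lemma, the sortedness closes the lower-bound half immediately. Since at most $\ell m_\kappa$ points satisfy $y_i\le \ell/\kappa$, the monotonicity $y_1\le\cdots\le y_n$ forces the $(\ell m_\kappa+1)$-st point to satisfy $y_{\ell m_\kappa+1}>\ell/\kappa$, hence $y_{\ell m_\kappa+1}\ge \ell/\kappa$ (the statement is vacuous whenever the index $\ell m_\kappa+1$ exceeds $n$, so no validity issue arises). The same subdivision argument applied to the top strips shows that $[0,1]\times[1-\ell/\kappa,\,1]$ also holds at most $\ell m_\kappa$ points, and re-reading the sorted list from the top gives $y_{\,n-\ell m_\kappa}\le 1-\ell/\kappa$; alternatively one invokes the reflection $(x,y)\mapsto(x,1-y)$, which preserves the unit square and every triangle area, to reduce the top case to the bottom one.

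The main obstacle is the careful bookkeeping in the subdivision, namely that a point lying exactly on an interior boundary line $y=j/\kappa$ must be counted in only one strip. I would handle this by assigning points via half-open strips $[(j-1)/\kappa,\,j/\kappa)$ for $j<\ell$ together with the closed top strip, so that each point belongs to exactly one piece while each piece remains contained in a closed height-$1/\kappa$ strip to which the $m_\kappa$ bound applies. Apart from this detail the argument is routine; the only genuinely substantive quantity is $m_\kappa$ itself, whose value is obtained by solving an auxiliary Heilbronn-type feasibility problem on a single strip before the bounds can be instantiated.
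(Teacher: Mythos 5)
Your proposal is correct and follows essentially the same route as the paper: bound the number of points in the bottom (resp.\ top) rectangle $[0,1]\times[0,\ell/\kappa]$ by $\ell m_\kappa$ using the strip capacity $m_\kappa$, then convert that count into an index bound via the sortedness $y_1\le\cdots\le y_n$ from Proposition~\ref{prop:order-conventions}(i). The only cosmetic difference is that you sum the per-strip capacities directly (with a careful half-open assignment of boundary points) whereas the paper argues by contradiction via the pigeonhole principle; these are equivalent, and your explicit handling of points on the lines $y=j/\kappa$ is if anything slightly more careful than the paper's.
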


\begin{proof}
Let us first consider the lower bound. The first \(\ell\) strips together form a rectangle of height \(\ell/\kappa\). If \(y_{\ell\,m_\kappa+1} < \ell/\kappa\), then the first \(\ell\,m_\kappa + 1\) points lie strictly inside these first \(\ell\) strips. By the Generalized Pigeonhole Principle, at least one of these \(\ell\) strips must contain \(\lceil (\ell\,m_\kappa + 1)/\ell \rceil = m_\kappa + 1\) points. However, by the definition of \(m_\kappa\), placing \(m_\kappa+1\) points in a strip of height \(1/\kappa\) forces the minimum triangle area to be strictly less than \(\underline{H}_n\), contradicting feasibility. Thus, we must have \(y_{\ell\,m_\kappa+1} \ge \ell/\kappa\).

For the upper bound, consider the top \(\ell\) strips, which cover the vertical range \([1 - \ell/\kappa, 1]\). Suppose, for the sake of contradiction, that \(y_{n - \ell\,m_\kappa} > 1 - \ell/\kappa\). Since the \(y\)-coordinates are sorted in a non-decreasing fashion (\(y_1 \le \dots \le y_n\)), this inequality implies that all points \(y_i\) with indices \(i \ge n - \ell\,m_\kappa\) must also lie strictly within this top region. The number of such points is \(n - (n - \ell\,m_\kappa) + 1 = \ell\,m_\kappa + 1\). We are thus attempting to place \(\ell\,m_\kappa + 1\) points into a region consisting of \(\ell\) strips. By the Generalized Pigeonhole Principle, at least one of these top strips must contain \(\lceil (\ell\,m_\kappa + 1)/\ell \rceil = m_\kappa + 1\) points. As established in the lower bound case, a single strip of height \(1/\kappa\) cannot  contain \(m_\kappa + 1\) points.
Since this leads to a contradiction, we deduce that  \(y_{n - \ell\,m_\kappa} \le 1 - \ell/\kappa\).
\end{proof}

All bounds on the ordered \(y\)-coordinates, including those implied by Proposition~\ref{onepointoneachedge} and Proposition~\ref{prop:strip-bounds}, are compiled in \autoref{tab:strip-bounds}.

\begin{rem}

The specific bounds listed in \Cref{tab:strip-bounds} were experimentally generated using an iterative procedure. We determined the capacity \(m_\kappa\) for strips of height \(1/\kappa\) ($\kappa=2,3,\dots$) by solving a sequence of restricted Heilbronn problems using the Approach 1 (MIQCP) formulation described in \Cref{Approach1}. To accelerate these subproblems, we enforced the symmetry breaking constraints defined in \Cref{sec:order-conventions}, as well as at least one point on each edge constraints detailed in \Cref{sec:onepointeachedge}. For each strip height, we incremented the number of points \(m\) until the solver proved infeasibility (i.e., 
the optimal value dropped below \(\underline{H}_n\)); the largest feasible \(m\) was recorded as \(m_\kappa\).


By applying this capacity logic symmetrically—considering rectangles extending upwards from $y=0$ and downwards from $y=1$—we established both lower and upper bounds. The range of indices \(i\) for which we computed these bounds depends on the instance size. 
For \(n=6, 7, 8\), we targeted \(i=2, \dots, n-1\). 
For \(n=9, 10\), leveraging the boundary occupancy properties defined in Proposition~\ref{prop:nine-eight-edge} (proven for \(n=9\)) and Conjecture~\ref{conj:atleast2ptsonedges} (conjectured for \(n=10\)), we restricted the search to the intermediate points \(i=3, \dots, n-2\). 
For points excluded from this search, the bounds in \Cref{tab:strip-bounds} reflect bounds derived from Corollary~\ref{cor:implementingx} and Proposition~\ref{prop:nine-eight-edge}.

This pre-computation step is highly efficient. The total wall-clock times required to generate the complete set of bounds were approximately 0.346 seconds for \(n=6\), 3.247 seconds for \(n=7\), 29.337 seconds for \(n=8\), and 35.382 seconds for \(n=9\). For \(n=10\), the time increased to 1\,293.015 seconds, 
however, this overhead remains negligible compared to the substantial reduction in the time spent by the global solver due to these variable bounds.
\end{rem}




\begin{table}[h!]
\centering
\renewcommand{\arraystretch}{1.5}
\begin{tabular}{|c|cc|cc|cc|cc|cc|}
\hline
 & \multicolumn{2}{|c|}{$n=6$} & \multicolumn{2}{|c|}{$n=7$} & \multicolumn{2}{|c|}{$n=8$} & \multicolumn{2}{|c|}{$n=9$} & \multicolumn{2}{|c|}{$n=10$} \\ 
\hline
\textbf{$i$} & $\underline{y}_i$ & $\overline{y}_i$ & $\underline{y}_i$ & $\overline{y}_i$ & $\underline{y}_i$ & $\overline{y}_i$ & $\underline{y}_i$ &$\overline{y}_i$ & $\underline{y}_i$ & $\overline{y}_i$ \\ 
\hline
1  & $0$ & $0$ & $0$ & $0$ & $0$ & $0$ & $0$ & $0$ & $0$ & $0$ \\[0.5ex]
\hline
2  & $0$ & $\tfrac{1}{2}$ 
   & $0$ & $\tfrac{1}{2}$
   & $0$ & $\tfrac{1}{2}$
   & $0$ & $\tfrac{1}{100}$
   & $0$ & $\tfrac{1}{2}$ \\
\hline
3  & $\tfrac{1}{5}$ & $\tfrac{4}{5}$
   & $\tfrac{1}{6}$ & $\tfrac{2}{3}$
   & $\tfrac{1}{7}$ & $\tfrac{1}{2}$
   & $\tfrac{1}{10}$ & $\tfrac{1}{2}$
   & $\tfrac{1}{11}$ & $\tfrac{1}{2}$ \\
\hline
4  & $\tfrac{1}{5}$ & $\tfrac{4}{5}$
   & $\tfrac{1}{6}$ & $\tfrac{5}{6}$
   & $\tfrac{1}{7}$ & $\tfrac{2}{3}$
   & $\tfrac{1}{10}$ & $\tfrac{2}{3}$
   & $\tfrac{1}{11}$ & $\tfrac{1}{2}$ \\
\hline
5  & $\tfrac{1}{2}$ & $1$
   & $\tfrac{1}{3}$ & $\tfrac{5}{6}$
   & $\tfrac{1}{3}$ & $\tfrac{6}{7}$
   & $\tfrac{1}{4}$ & $\tfrac{3}{4}$
   & $\tfrac{1}{5}$ & $\tfrac{2}{3}$ \\
\hline
6  & $1$ & $1$
   & $\tfrac{1}{2}$ & $1$
   & $\tfrac{1}{2}$ & $\tfrac{6}{7}$
   & $\tfrac{1}{3}$ & $\tfrac{9}{10}$
   & $\tfrac{1}{3}$ & $\tfrac{4}{5}$ \\
\hline
7  & -- & --
   & $1$ & $1$
   & $\tfrac{1}{2}$ & $1$
   & $\tfrac{1}{2}$ & $\tfrac{9}{10}$
   & $\tfrac{1}{2}$ & $\tfrac{10}{11}$ \\
\hline
8  & -- & --
   & -- & --
   & $1$ & $1$
   & $\tfrac{99}{100}$ & $1$
   & $\tfrac{1}{2}$ & $\tfrac{10}{11}$ \\
\hline
9  & -- & --
   & -- & --
   & -- & --
   & $1$ & $1$
   & {$\tfrac{1}{2}$} & $1$ \\
\hline
10    & -- & --
   & -- & --
   & -- & --
   & -- & --
   & $1$ & $1$ \\
\hline
\end{tabular}
\caption{Compilation of variable bounds for $y_i$  \(y_1 \le \cdots \le y_n\) for \(n=6,\dots,10\).
Entries are lower bounds $\underline{y}_i$ and upper bounds  $\overline{y}_i$  for each index \(i\).
The symbol -- indicates that the index \(i\) does not exist for that~\(n\).}
\label{tab:strip-bounds}
\end{table}

\section{Computational Experiments} \label{sec:computational_experiments}

\subsection{Computational Setup}
All experiments are run on a Dell workstation fitted with
two 48-core Intel Xeon Gold 6248R CPUs (3.0 GHz base clock, 96 cores in total) and 256 GB of RAM. The system operates on a 64-bit Microsoft Windows~11  system. Model development and execution are mostly carried out in Visual Studio Code, and optimization is performed with Gurobi Optimizer 11.0.1 and BARON 22.4.20.



\subsection{Preliminary Experiments Setup}\label{ExperimentSetup}
We evaluate the proposed methods in Section~\ref{Approaches} in three stages: (i) we first identify the best baseline among the three formulations in Section~\ref{ComparingApproaches}, (ii) we then select the faster solver between Gurobi and BARON in Section~\ref{sec:ComparingSolvers}, and (iii) we finally quantify the contribution of the enhancements developed in Section~\ref{sec:enhancements}, as reported in Section~\ref{sec:EnhancementAblation}, in terms of computational effort. Unless otherwise specified, all experiments are run under a uniform wall–clock time limit of one day per instance.

\subsubsection{Formulation Comparison}\label{ComparingApproaches}
We compare the three formulations on instances with \(n\in\{3,4,5,6\}\) using Gurobi under default settings, except for a one-day (i.e., a 86\;400-second) time limit. For Approach 3, we use its MILP relaxation with discretization parameter \(p=10\).
Proven lower bounds (LB) and upper bounds (UB) as well as and runtimes (in seconds) are summarized in Table~\ref{tab:approach_combined}.

\makegapedcells
\begin{table}[h!]
\centering
\begin{tabular}{|c|c|c|c|}
\hline
\textbf{$n$} &
\makecell{\textbf{Approach 1 (MIQCP)}\\[1pt]LB--UB \\(Time (s))} &
\makecell{\textbf{Approach 2 (QCP)}\\[1pt]LB--UB \\(Time (s))} &
\makecell{\textbf{Approach 3 (MILP Rel.)}\\[1pt]LB$^\prime$--UB \\(Time (s))} \\
\hline
3 & \makecell{0.5--0.5\\(0.019)} &
    \makecell{0.5--0.5\\(0.016)} &
    \makecell{0.5--0.5\\(0.092)} \\
\hline
4 & \makecell{0.5--0.5\\(0.005)} &
    \makecell{0.5--0.5\\(0.031)} &
    \makecell{0.5--0.5\\(0.047)} \\
\hline
5 & \makecell{0.1924508--0.1924508\\(4.538)} &
    \makecell{0.1924507--0.1924611\\(9.632)} &
    \makecell{0.1926462--0.1926462\\(27.557)} \\
\hline
6 & \makecell{0.1250012--0.1250137\\(468.535)} &
    \makecell{0.1250019--0.1250144\\(59\,920.142)} &
    \makecell{0.1252447--0.1266927\\({86\,400})} \\
\hline
\end{tabular}
\caption{Combined bounds and runtimes (seconds) for the three approaches. Note that Approach 3 is a relaxation; its objective values represent upper bounds on the true optimum. The column LB\(^\prime\) denotes the best objective found for the \emph{relaxed} problem, which may  not be a valid  lower bound for the original Heilbronn problem.}
\label{tab:approach_combined}
\end{table}
\makegapedcells

\paragraph{Accuracy vs.\ known optima.}
For \(n=3\) and \(n=4\), all three approaches recover the known global optimum \(0.5\), and in each case the lower and upper bounds coincide.

For \(n=5\), the proven global optimum is \(0.1924500\) \cite{lu1991goldberg}. Approaches 1 and 2 both reach this value to within the requested relative optimality tolerance (reporting \(0.1924508\) and \(0.1924507\), respectively). Approach 1 also certifies optimality by matching its lower and upper bounds exactly (\(0.1924508\)--\(0.1924508\)), whereas Approach~2 still reports a small residual gap.
Approach 3 returns a value of \(0.1926462\), which is strictly larger than the true optimum. This is consistent with the fact that Approach 3 is a relaxation of the original maximization problem; therefore, it provides a valid \emph{upper bound} on the optimal value but cannot guarantee a feasible placement (a valid lower bound) for the original problem.

For \(n=6\), where the global optimum is \(1/8 = 0.125\) \cite{lu1991goldberg}, Approaches 1 and 2 again produce lower bounds consistent with the known optimum (\(0.1250012\) and \(0.1250019\), respectively). In contrast, Approach 3 provides a 
bound of \(0.1252447\). As with the \(n=5\) case, this larger value reflects the relaxation gap: Approach 3 effectively overestimates the maximum area because it relaxes the non-convex constraints. Additionally, it retains a relatively wide optimality gap (\(0.1252447\)--\(0.1266927\)) within the time limit.

\paragraph{Computation time.}
Table~\ref{tab:approach_combined} shows that    the second approach is fastest when \(n=3\), followed by the first and (by a large margin) the third.
For larger instances, the first approach becomes consistently faster: at \(n=6\) it solves in about {$468.535$} seconds, while the second requires {$59\,920.142$} seconds and the third hits the {one–day} limit.
The performance gap grows sharply with \(n\). This widening gap indicates that, beyond very small instances, only the first approach remains computationally practical, while the others become prohibitively expensive and are no longer reasonable candidates for large \(n\).


\paragraph{Selection.}
Both the first and second approaches produce objective values that agree with the known global optima for \(n=3,\dots,6\), and in particular neither shows meaningful numerical deviation at the reported tolerance. Moreover, the first approach is often able to certify optimality outright by closing the lower/upper bound gap, whereas the second approach may still report a small residual gap. When runtime is taken into account, the distinction becomes sharper: although the second approach is competitive for very small instances, it becomes dramatically slower as \(n\) increases, while the first approach remains practical. The third approach is both less accurate (e.g., for \(n=5,6\)) and substantially slower. On this basis, we select Approach~1 as the baseline for all subsequent experiments.






\subsubsection{Solver Comparison}\label{sec:ComparingSolvers}

We evaluate the \emph{superior model} identified in Section~\ref{ComparingApproaches} (without enhancements) on both BARON and Gurobi to select the best performing solver  for subsequent experiments. 
Unless stated otherwise, both solvers are run under their default settings, except that a wall–clock limit of one day per instance is imposed.
The lower and upper bounds obtained,  and runtimes (in seconds) are summarized in Table~\ref{tab:solver_combined}.

\begin{table}[h!]
\centering
\begin{tabular}{|c|cc|cc|}
\hline
\multirow{2}{*}{$n$}    & \multicolumn{2}{c|}{\textbf{Gurobi}} &
      \multicolumn{2}{c|}{\textbf{BARON}} \\
 &
  {LB--UB} & {Time(s)} &
  {LB--UB} & {Time(s)} \\
\hline
3 & 0.5--0.5             &   0.021 & 0.5--0.5             &   0.064 \\
4 & 0.5--0.5             &   0.013 & 0.5--0.5             &   0.062 \\
5 & 0.1924508--0.1924508 &   4.540 & 0.1924553--0.1924562 & 285.302 \\
6 & 0.1250012--0.1250137 & 468.535 & 0.1250081--0.1992033 &  86\,400 \\
\hline
\end{tabular}
\caption{Combined solver bounds and runtimes (seconds). The value
86\,400\,s indicates that the one-day time limit was reached.}
\label{tab:solver_combined}
\end{table}

\paragraph{Accuracy vs.\ known optima.}
For \(n\in\{3,4\}\), both solvers recover the known global optimum \(0.5\), and in each case the lower and upper bounds coincide. For \(n=5\), the proven optimum is \(H_{5}^*=\sqrt{3}/9=0.192450\) \cite{lu1991goldberg}. Gurobi attains this value  within the requested relative optimality tolerance and certifies it by reporting matching bounds \(0.1924508\)--\(0.1924508\). BARON returns a slightly higher value, with bounds \(0.1924553\)--\(0.1924562\), which is above the proven optimum and does not close the optimality gap. For \(n=6\), where the exact value is \(H_{6}^*=1/8=0.125\) \cite{lu1991goldberg,dress1995heilbronn}, Gurobi reports bounds \(0.1250012\)--\(0.1250137\), which are consistent with the known optimum at the stated tolerance whereas BARON produces a much wider interval of \(0.1250081\)--\(0.1992033\). 

\paragraph{Computation time.}
For \(n\in\{3,4\}\), the runtime difference is negligible. 
Starting at \(n=5\), Gurobi is substantially faster (about \(4.54\) s vs.\ \(285.30\) s), and at \(n=6\) the gap widens dramatically: Gurobi solves in \(468.535\) s, while BARON reaches the one–day cap (reported as \(86{\,}400\) s) without certification.

\paragraph{Selection.}
Both solvers reproduce the known optimal value for very small instances, but starting at \(n=5\) their behavior diverges. Gurobi continues to match the known optimum to within the requested tolerance and is able to certify that value by closing the bound gap, while BARON begins to report values that exceed the proven optimum and retains a non-negligible gap. By \(n=6\), Gurobi remains consistent with the exact value \(1/8\), whereas BARON no longer delivers a tight bound. Taken together with the runtime trends in Table~\ref{tab:solver_combined}, these results indicate that {Gurobi} is the more dependable solver for our purposes, and we therefore use it for all subsequent computations.





\subsubsection{Enhancement Comparison}\label{sec:EnhancementAblation}

Building on the best approach identified in Section~\ref{ComparingApproaches} and the solver choice in Section~\ref{sec:ComparingSolvers}, we quantify the contribution of three enhancement groups via a full-factorial design with \(2^{3}\) experiments in Table~\ref{tab:groupcomparison_all}.

\begin{table}[h!]
\centering
\begin{tabular}{|c|c|c|c|}
\hline
\textbf{$n$} & \textbf{Groups} & \textbf{Lower Bound--Upper Bound} & \textbf{Time(s)} \\
\hline
\multirow{8}{*}{\textbf{7}}
& --      & 0.0838601--0.1599112 & 86\,400 \\
& 1       & 0.0838597--0.0838597 & 16.570 \\
& 2       & 0.0838599--0.0838599 & 218.782 \\
& 3       & 0.0838598--0.1229744 & 86\,400 \\
& 1, 2     & 0.0838601--0.0838601 & 7.537 \\
& 1, 3     & 0.0838599--0.0838599 & 29.638 \\
& 2, 3     & 0.0838600--0.0838600 & 104.361 \\
& 1, 2, 3   & 0.0838595--0.0838595 & 4.542 \\
\hline
\multirow{8}{*}{\textbf{8}}
& --      & 0.0723771--0.1684433 & 86\,400 \\
& 1       & 0.0723774--0.0723774 & 314.720 \\
& 2       & 0.0723775--0.0921830 & 86\,400 \\
& 3       & 0.0723772--0.1475525 & 86\,400 \\
& 1, 2     & 0.0723773--0.0723773 & 385.604 \\
& 1, 3     & 0.0723770--0.0723770 & 82.888 \\
& 2, 3     & 0.0723775--0.0723775 & 3\,639.555 \\
& 1, 2, 3   & 0.0723767--0.0723767 & 53.334 \\
\hline
\multirow{8}{*}{\textbf{9\footnotemark}}
& --      & --                    & 86\,400 \\
& 1       & 0.0548767--0.0640982 & 86\,400 \\
& 2       & 0.0548770--0.0893424 & 86\,400 \\
& 3       & 0.0483269--0.1803900 & 86\,400 \\
& 1, 2     & 0.0548769--0.0548769 & 3\,334.102 \\
& 1, 3     & 0.0548765--0.0689799 & 86\,400 \\
& 2, 3     & 0.0548771--0.0730435 & 86\,400 \\
& 1, 2, 3   & 0.0548767--0.0548767 & 633.331 \\
\hline
\end{tabular}
\caption{Combined LB/UB and runtimes (seconds) for enhancement groups across \(n\in\{7,8,9\}\), computed with Gurobi. An entry of 86{\,}400\,s indicates the time cap was reached.}
\label{tab:groupcomparison_all}
\end{table}
\footnotetext{For the $n=9$ instance, a one-day preliminary computation is required to establish Proposition~\ref{prop:nine-eight-edge}; thereafter, combinations containing enhancement group~2 could be used.\label{footnote}}

For \(n\in\{7,8,9\}\), Table~\ref{tab:groupcomparison_all} reports eight configurations corresponding to all on/off patterns of the three groups \(\mathcal G_{1},\mathcal G_{2},\mathcal G_{3}\): (i) none, (ii–iv) singletons, (v–vii) all pairwise combinations, and (viii) all three enabled. \emph{The groups themselves are defined uniformly for all \(n\)}; however, each group contains several constraints whose applicability depends on the number of points $n$. Thus, when a group is “on” at a given \(n\), only those constraints within the group whose eligibility conditions are satisfied are activated, while the inapplicable members of the group remain inactive.

Each configuration is evaluated by (a) wall-clock time (seconds) and (b) solution quality summarized by \(\mathrm{LB}/\mathrm{UB}\) (and the implied optimality gap). All runs use the default Gurobi settings
and a uniform wall-clock limit per instance (entries equal to the cap in Table~\ref{tab:groupcomparison_all} indicate runs that timed out).

This design allows us to estimate the \emph{main effects} of each group (marginal improvement when toggled on) and the \emph{interaction effects} among groups across \(n=7,8,9\).

\paragraph{Accuracy vs.\ known optima.}
For \(n=7\) and \(n=8\), the optimum values are approximately \(H_{7}^*=0.083859\) \cite{zeng2008heilbronn} and \(H_{8}^*=0.0723764\) \cite{dehbi2022heilbronn}. In Table~\ref{tab:groupcomparison_all}, some enhancement settings reproduce these known optima and certify them by reporting matching lower and upper bounds (LB\,=\,UB), while others do not. For \(n=7\), every configuration except the baseline with no enhancements (\texttt{--}) and the standalone group \(\mathcal G_{3}\) achieves the lower bound being equal to the upper bound at \(H_{7}^*\), i.e., certifies optimality. For \(n=8\), certification occurs for settings that include \(\mathcal G_{1}\) (namely \(\mathcal G_{1}\), \(\mathcal G_{1,2}\), \(\mathcal G_{1,3}\), and \(\mathcal G_{1,2,3}\)) and also for \(\mathcal G_{2,3}\); in contrast, \texttt{--}, \(\mathcal G_{2}\), and \(\mathcal G_{3}\) do not certify. 

For \(n=9\), the exact optimum is not known; the best published interval; is
\(0.054875999 < H_{9}^* < 0.054878314\) \cite{chen2017searching}. Among our configurations, only \(\mathcal G_{1,2}\) and \(\mathcal G_{1,2,3}\) return the lower bound   equals to the upper bound  at values consistent with that interval, thereby producing certificates. All other settings (\texttt{--}, \(\mathcal G_{1}\), \(\mathcal G_{2}\), \(\mathcal G_{3}\), \(\mathcal G_{1,3}\), \(\mathcal G_{2,3}\)) fail to certify within the allowed {time} (see Table~\ref{tab:groupcomparison_all}).

\paragraph{Computation time.}
The baseline \texttt{--} and the singleton \(\mathcal G_{3}\) consistently time out (86{\,}400\,s).
Among certifying configurations, there is a clear positive interaction: enabling all three groups \(\mathcal G_{1,2,3}\) is fastest for  each \(n\) (4.542\,s for \(n=7\); 53.334\,s for \(n=8\); 633.331\,s for \(n=9\)), improving substantially over strong pairwise settings (e.g., \(\mathcal G_{1,2}\): 7.537\,s, 385.604\,s, and {$3{\,}334.102\,$}s, respectively).
Some combinations can certify but remain slow (e.g., \(\mathcal G_{2,3}\) at \(n=8\): {$3{\,}639.555\,$}s, indicating that \(\mathcal G_{3}\) is beneficial only when paired with \(\mathcal G_{1}\) and/or \(\mathcal G_{2}\), and otherwise may hinder pruning.

\paragraph{Selection.}
Across \(n=7\) and \(n=8\), the enhancement settings that include \(\mathcal G_{1}\) are consistently able to match and certify the known global optima, while configurations that exclude \(\mathcal G_{1}\) often fail to do so. For \(n=9\), where only a bracket is available, the only settings that both respect this interval and close the gap are \(\mathcal G_{1,2}\) and \(\mathcal G_{1,2,3}\). When runtime is also taken into account (Table~\ref{tab:groupcomparison_all}), \(\mathcal G_{1,2,3}\) not only certifies whenever certification is possible, but does so more efficiently than the other certifying combinations. We therefore adopt \(\mathcal G_{1,2,3}\) as the default configuration for subsequent experiments.

\subsection{Final Results}\label{FinalResults}
Guided by the results above, we adopt {Approach~1} as the base formulation, use {Gurobi} as the solver, and start from the full enhancement bundle \(\mathcal{G}_{1,2,3}\) (the best-performing group combination). Instance-specific enhancements are activated {only} when valid for the given \(n\): the objective-value bound in Section~\ref{sec:boundsHn} is inapplicable at \(n=3\); the one-point-on-each-edge result (Proposition~\ref{onepointoneachedge}) also does not apply at \(n=3\); the two-points-per-edge motif (Proposition~\ref{2pointsedge}) is used only for \(n\in\{7,8\}\); and the near-boundary pattern for \(n=9\) (Section~\ref{sec:nine-eight-edge}) is used only at \(n=9\).
The performance of the resulting setting for \(3\le n\le 10\) is summarized in Table~\ref{tab:optimization_results}, which reports, for each \(n\), the certified lower/upper bounds, the wall-clock time, and the \emph{computed area}, which is computed directly from the reported coordinates of an optimal placement.

Since the optimal values for \(n = 5,6,7,8\) are already established in the literature, our figures for these rows essentially coincide with the known values; any tiny discrepancies stem from numerical tolerances in the solver.

The only previous record for the \(n = 9\) case is due to Chen et al., who reported $0.054875999 \;<\; H_{9}^* \;<\; 0.054878314$ in~\cite{chen2017searching}. Their result was the product of an intensive grid search method run on a specialized, large-scale computing infrastructure, including an IBM NUMA cluster with 4TB of RAM and three GPGPU machines equipped with 12 NVIDIA Tesla C2050 cards in total. This entire search required {$2{\,}695{\,}833.174$} seconds (approximately 31 days) of GPGPU computation time~\cite{chen2017searching}. In stark contrast, our refined mathematical programming model achieves a certified global optimum (as shown in \Cref{tab:optimization_results}) in just \(633.331\) seconds (approximately 10 minutes) when running on a single
CPU-based workstation. We would like to remind the reader that even if we include the pre-processing time needed to numerically prove Proposition~\ref{prop:nine-eight-edge} (also see, Footnote~\ref{footnote}), the resulting computational effort is still 
significantly smaller compared to reference~\cite{chen2017searching}.

\begin{table}[h!]
\centering
\begin{tabular}{|c|c|c|c|c|c|}
\hline
\textbf{$n$} & \textbf{Lower Bound} & \textbf{Upper Bound} & \textbf{Time (s)} & \textbf{Computed Area\footnotemark} & \textbf{Proven Answer}\\
\hline
3 & 0.5000000 & 0.5000000 & 0.016 & 0.5000000 & 0.5 \\
4 & 0.5000000 & 0.5000000 & 0.026 & 0.5000000 & 0.5 \\
5 & 0.1924506 & 0.1924506 & 0.186 & 0.1924499 & 0.1924500 \\
6 & 0.1250010 & 0.1250010 & 0.787 & 0.1249999 & 0.125 \\
7 & 0.0838594 & 0.0838594 & 5.280 & 0.0838584 & 0.0838591 \\
8 & 0.0723767 & 0.0723767 & 53.334 & 0.0723758 & 0.0723764 \\
9 & 0.0548767 & 0.0548767 & 633.331 & 0.0548756 & -\\
10 & 0.0465383 & - & 86\;400 & 0.0465369 & -\\
\hline
\end{tabular}
\caption{Optimization results for varying $n$ values.}
\label{tab:optimization_results}
\end{table}
\footnotetext{The solver-reported objective value is subject to the solver’s numerical tolerances and model approximations. To obtain a more accurate value for the specific configuration shown in Table~\ref{tab:n6_to_n10_placements}, we recomputed the minimum triangle area directly from the listed coordinates and report it here as the \emph{computed area}.}

The last row of \Cref{tab:optimization_results} shows our best result for $n=10$ run under   a \textit{restricted model} that included two strong, but unproven structural conjectures: 
(i) We assume that Conjecture~\ref{conj:atleast2ptsonedges} holds true, that is,   an optimal solution must place \emph{exactly} eight points on the boundary edge. This hypothesis is motivated by the   two-point-per-edge \emph{near-boundary} pattern observed for $n=9$ in \Cref{sec:nine-eight-edge}. 
(ii) We add the symmetry-breaking constraint \(y_5 \le \tfrac{1}{2}\), justified by the horizontal partition bounds in \Cref{tab:strip-bounds} that limit how many points can lie in each half.
Under this restricted model, \(n=10\) quickly exposed our current hardware ceiling. Across multiple runs with progressively tighter cuts, the solver consistently certified a \emph{provable} lower bound of \(0.0465383\) (with a computed area of~\(0.0465369\)) in about \(800\) seconds—already matching the best empirical value of \(0.046537\) reported by  Comellas and Yebra in~\cite{comellas2002new}. Closing the remaining gap to obtain a matching upper bound (and thus a full certificate of optimality) appears to require substantially longer wall time or new structural insights. While \(n=10\) remains open, the speed at which such a strong lower bound is achieved highlights the promise of the strengthened formulation once additional computational power or stronger valid inequalities   are available.


The best configurations found for \(n=6\) through \(n=10\) are visualized in Figure~\ref{fig:n6_to_n10_placements}. The complete point coordinates are also provided in Table~\ref{tab:n6_to_n10_placements} of Appendix~\ref{Appendix}.

\begin{figure}[h!]
\centering
\begin{tikzpicture}[scale=4, thick]

\tikzset{alltris/.style={black!40, line width=0.2pt}}

\begin{scope}[shift={(0,-1.35)}]
  \node[align=center] at (0.5,1.10)
    {{$n=6$}\\[-0.2ex]{\scriptsize computed area: $0.1249999$}};

  \draw (0,0) rectangle (1,1);

  \coordinate (P1) at (0.0, 0.5002079002445663);
  \coordinate (P2) at (0.9583794963842281, 0.000009489865608419215);
  \coordinate (P3) at (1.0, 0.5002257602104074);
  \coordinate (P4) at (0.4585968225886384, 0.0);
  \coordinate (P5) at (0.0415841747655408, 0.9999908730285099);
  \coordinate (P6) at (0.5418012832835395, 1.0);
  \def\N{6}

  \foreach \i in {1,...,\N}{
    \foreach \j in {1,...,\N}{
      \foreach \k in {1,...,\N}{
        \ifnum\i<\j\relax\ifnum\j<\k\relax
          \draw[alltris] (P\i)--(P\j)--(P\k)--cycle;
        \fi\fi
      }
    }
  }

  \fill[blue!25,opacity=.35] (P2)--(P3)--(P4)--cycle;
  \draw[blue!60!black]       (P2)--(P3)--(P4)--cycle;

  \foreach \Q in {P1,P2,P3,P4,P5,P6}{\fill (\Q) circle (0.015);}
\end{scope}

\begin{scope}[shift={(1.35,-1.35)}]
  \node[align=center] at (0.5,1.10)
    {{$n=7$}\\[-0.2ex]{\scriptsize computed area: $0.0838584$}};

  \draw (0,0) rectangle (1,1);

  \coordinate (P1) at (0.0, 0.0);
  \coordinate (P2) at (0.8191740916746527, 0.0);
  \coordinate (P3) at (0.41614167326405116, 0.2872578887602705);
  \coordinate (P4) at (1.0, 0.2872582794565236);
  \coordinate (P5) at (0.507413965213032, 0.8060633265680573);
  \coordinate (P6) at (0.8648098677507948, 1.0);
  \coordinate (P7) at (0.0, 1.0);
  \def\N{7}

  \foreach \i in {1,...,\N}{
    \foreach \j in {1,...,\N}{
      \foreach \k in {1,...,\N}{
        \ifnum\i<\j\relax\ifnum\j<\k\relax
          \draw[alltris] (P\i)--(P\j)--(P\k)--cycle;
        \fi\fi
      }
    }
  }

  \fill[teal!25,opacity=.35] (P2)--(P4)--(P6)--cycle;
  \draw[teal!80!black]       (P2)--(P4)--(P6)--cycle;

  \foreach \Q in {P1,P2,P3,P4,P5,P6,P7}{\fill (\Q) circle (0.015);}
\end{scope}

\begin{scope}[shift={(2.70,-1.35)}]
  \node[align=center] at (0.5,1.10)
    {{$n=8$}\\[-0.2ex]{\scriptsize computed area: $0.0723758$}};

  \draw (0,0) rectangle (1,1);

  \coordinate (P1) at (0.0, 0.0);
  \coordinate (P2) at (0.8114202566960157, 0.0);
  \coordinate (P3) at (1.0, 0.2324081561857156);
  \coordinate (P4) at (0.37716169952048906, 0.23240815618571525);
  \coordinate (P5) at (0.0, 0.7675903042903774);
  \coordinate (P6) at (0.6228391395467414, 0.7675911648645654);
  \coordinate (P7) at (0.18858118900903162, 1.0);
  \coordinate (P8) at (1.0, 1.0);
  \def\N{8}

  \foreach \i in {1,...,\N}{
    \foreach \j in {1,...,\N}{
      \foreach \k in {1,...,\N}{
        \ifnum\i<\j\relax\ifnum\j<\k\relax
          \draw[alltris] (P\i)--(P\j)--(P\k)--cycle;
        \fi\fi
      }
    }
  }

  \fill[purple!25,opacity=.35] (P3)--(P6)--(P7)--cycle;
  \draw[purple!80!black]       (P3)--(P6)--(P7)--cycle;

  \foreach \Q in {P1,P2,P3,P4,P5,P6,P7,P8}{\fill (\Q) circle (0.015);}
\end{scope}


\begin{scope}[shift={(0.675,-2.70)}]
  \node[align=center] at (0.5,1.10)
    {{$n=9$}\\[-0.2ex]{\scriptsize computed area: $0.0548757$}};

  \draw (0,0) rectangle (1,1);

  \coordinate (P1) at (0.1734433903651553, 0.0);
  \coordinate (P2) at (0.8062260941999951, 0.0);
  \coordinate (P3) at (1.0, 0.17344395573013657);
  \coordinate (P4) at (0.0, 0.17344394237353644);
  \coordinate (P5) at (0.6531127134403488, 0.6531128229910201);
  \coordinate (P6) at (1.0, 0.7398341002188249);
  \coordinate (P7) at (0.0, 0.806226928561121);
  \coordinate (P8) at (0.17344290319510058, 1.0);
  \coordinate (P9) at (0.7398354599600717, 1.0);
  \def\N{9}

  \foreach \i in {1,...,\N}{
    \foreach \j in {1,...,\N}{
      \foreach \k in {1,...,\N}{
        \ifnum\i<\j\relax\ifnum\j<\k\relax
          \draw[alltris] (P\i)--(P\j)--(P\k)--cycle;
        \fi\fi
      }
    }
  }

  \fill[orange!25,opacity=.35] (P1)--(P5)--(P9)--cycle;
  \draw[orange!80!black]       (P1)--(P5)--(P9)--cycle;

  \foreach \Q in {P1,P2,P3,P4,P5,P6,P7,P8,P9}{\fill (\Q) circle (0.015);}
\end{scope}

\begin{scope}[shift={(2.025,-2.70)}]
  \node[align=center] at (0.5,1.10)
    {{$n=10$}\\[-0.2ex]{\scriptsize computed area: $0.0465369$}};

  \draw (0,0) rectangle (1,1);

  \coordinate (P1) at (0.15768906661548968, 0.0);
  \coordinate (P2) at (0.7479323353900781, 0.0);
  \coordinate (P3) at (0.0, 0.1576884322844123);
  \coordinate (P4) at (1.0, 0.2520799751572953);
  \coordinate (P5) at (0.6846219009192053, 0.31538516443587444);
  \coordinate (P6) at (0.3153850579539533, 0.684620952702754);
  \coordinate (P7) at (0.0, 0.7479304762954191);
  \coordinate (P8) at (1.0, 0.8423081336124026);
  \coordinate (P9) at (0.8423085297469296, 1.0);
  \coordinate (P10) at (0.2520782200363807, 1.0);
  \def\N{10}

  \foreach \i in {1,...,\N}{
    \foreach \j in {1,...,\N}{
      \foreach \k in {1,...,\N}{
        \ifnum\i<\j\relax\ifnum\j<\k\relax
          \draw[alltris] (P\i)--(P\j)--(P\k)--cycle;
        \fi\fi
      }
    }
  }

  \fill[red!25,opacity=.35] (P4)--(P8)--(P9)--cycle;
  \draw[red!80!black]       (P4)--(P8)--(P9)--cycle;

  \foreach \Q in {P1,P2,P3,P4,P5,P6,P7,P8,P9,P10}{\fill (\Q) circle (0.015);}
\end{scope}

\end{tikzpicture}
\caption{Point placements for $n=6$ to $n=10$. One of the smallest triangles is highlighted.}
\label{fig:n6_to_n10_placements}
\end{figure}
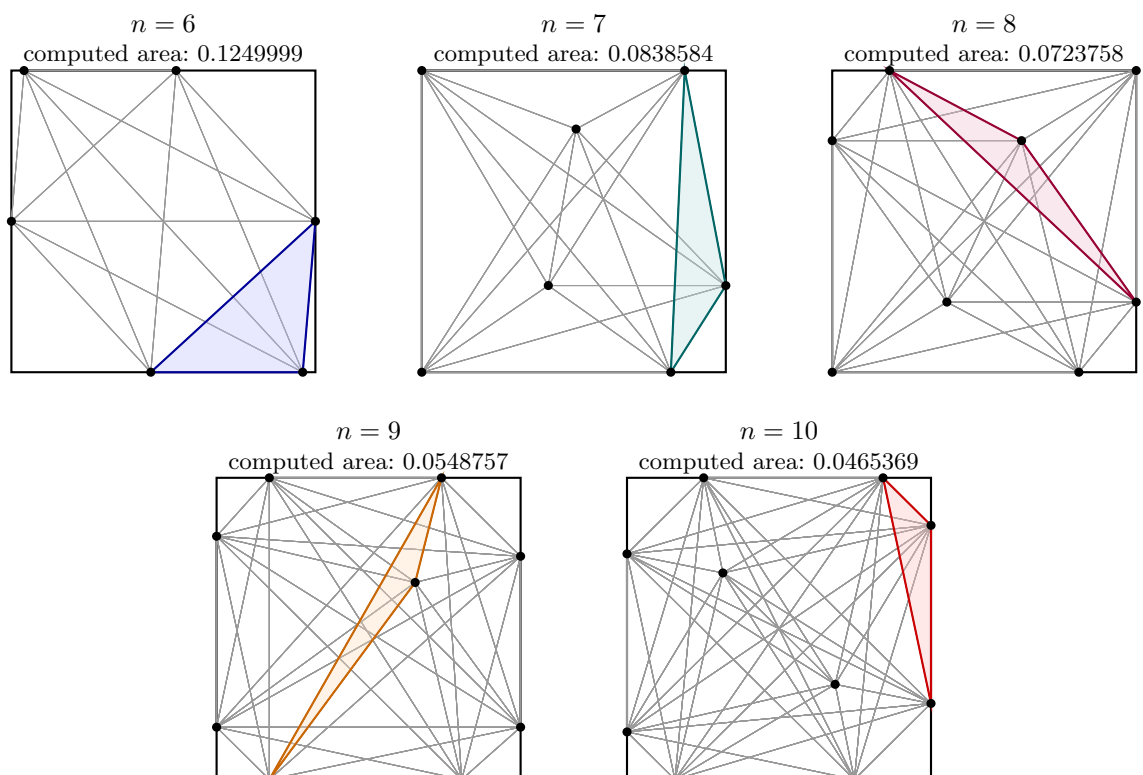

\section{Conclusions}\label{sec:conclusions}
In this paper, we explored three novel modeling approaches for the Heilbronn triangle problem and systematically strengthened each one with several groups of enhancements. Every group was first tested on small instances, both on its own and in combination with the others, and only the combinations that consistently reduced run‑time while preserving a proof of optimality were retained.

After experimenting with various  modeling approaches, we also benchmarked alternative solvers.  In particular, a BARON implementation   was tested against a Gurobi setup; the comparison showed that Gurobi consistently delivered shorter runtimes and tighter bounds.  With the optimal combination of constraints and the more efficient solver in place, the results were consolidated into Table~\ref{tab:optimization_results}.

With this configuration in place, we were able to \emph{certify} the global optimum for the nine–point case: the solver returned matching bounds (LB\,=\,UB) at \(0.0548767\) for \(n=9\), thereby closing the previously published bracket from~\cite{chen2017searching} and providing a full certificate of optimality (see Table~\ref{tab:optimization_results}). For \(n=10\), full certification remains open: under a restricted model that imposes two unproven structural assumptions—(i) the “two points per edge” near–boundary pattern and (ii) the symmetry–breaking constraint \(y_5 \le \tfrac{1}{2}\)—we obtained a strong valid lower bound of \(0.0465383\) (with a computed  area of  \(0.0465369\)) in about \(800\) seconds; however, no matching upper bound was reached within the one–day limit. Further structural insights or longer runs will likely be required to settle \(n=10\).

Looking ahead, we see a concrete path to certification for larger instances. First, the structural phenomenon behind Proposition~\ref{sec:nine-eight-edge} (“approximately two points per edge”) appears to extend to \(n=10\); a formal proof of this near-boundary pattern might materially shrink the feasible region and could close the remaining gap. Second, we have developed additional symmetry-breaking rules 
that, while not yet proved, consistently tighten relaxations in practice and are strong candidates for formal validation. Third, we plan to augment the model with \emph{orientation-count} constraints: letting \(\beta_{ijk}\in\{0,1\}\) indicate the sign of the signed area for triangle \((i,j,k)\), one can bound  the total number of “positive’’ vs.\ “negative’’ triangles, thereby restricting combinatorial degeneracies that currently inflate the search space. Together with tighter versions of existing cuts, these directions make the framework technically promising for certifying \(n=10\) and pushing toward \(n=11\).



\section*{Declarations}

\subsection*{Ethics approval and consent to participate}
Not applicable

\subsection*{Consent for publication}
Not applicable

\subsection*{Funding}
Not applicable


\subsection*{Availability of data and materials}
Data is provided within the manuscript and references therein.

\subsection*{Competing interests}
The authors declare that they have no competing interests.

\subsection*{Authors' contributions}
Amirali Modir and Amirhossein Monji worked on the methodology, implementation, visualization, analysis, writing and Burak Kocuk worked on the methodology, analysis, writing.

\bibliography{sn-bibliography}

@article{McCormick1976,
	author = {McCormick, Garth P. },
	date = {1976/12/01},
	id = {McCormick1976},
	isbn = {1436-4646},
	journal = {Mathematical Programming},
	number = {1},
	pages = {147--175},
	title = {Computability of global solutions to factorable nonconvex programs: Part I ---Convex underestimating problems},
	volume = {10},
	year = {1976}}

@article{baron,
	author = {Tawarmalani, Mohit and Sahinidis, Nikolaos V.},
	isbn = {1436-4646},
	journal = {Mathematical Programming},
	number = {2},
	pages = {225--249},
	title = {A polyhedral branch-and-cut approach to global optimization},
	volume = {103},
	year = {2005}}

@misc{Gurobi,
  author = {Gurobi Optimization, LLC},
  title = {{Gurobi Optimizer Reference Manual}},
  year = 2022,
  howpublished = {\url{www.gurobi.com/documentation/}}}

@article{chen2017searching,
  title={Searching approximate global optimal Heilbronn configurations of nine points in the unit square via GPGPU computing},
  author={Chen, Liangyu and Xu, Yaochen and Zeng, Zhenbing},
  journal={Journal of Global Optimization},
  volume={68},
  pages={147--167},
  year={2017},
  publisher={Springer}
}

@article{comellas2002new,
  title={New lower bounds for Heilbronn numbers},
  author={Comellas, Francesc and Yebra, J Luis A},
  journal={the electronic journal of combinatorics},
  pages={R6--R6},
  year={2002}
}

@article{dehbi2022heilbronn,
  title={Heilbronn’s Problem of Eight Points in the Square},
  author={Dehbi, Lydia and Zeng, Zhenbing},
  journal={Journal of Systems Science and Complexity},
  volume={35},
  number={6},
  pages={2452--2480},
  year={2022},
  publisher={Springer}
}

@inproceedings{zeng2008heilbronn,
  title={On the Heilbronn optimal configuration of seven points in the square},
  author={Zeng, Zhenbing and Chen, Liangyu},
  booktitle={International Workshop on Automated Deduction in Geometry},
  pages={196--224},
  year={2008},
  organization={Springer}
}

@article{tacspinar2024discretization,
  title={Discretization-based solution approaches for the circle packing problem},
  author={Ta{\c{s}}p{\i}nar, Rabia and Kocuk, Burak},
  journal={Engineering Optimization},
  volume={56},
  number={12},
  pages={2060--2077},
  year={2024},
  publisher={Taylor \& Francis}
}

@article{jalilian2023improved,
  title={Improved rank-one-based relaxations and bound tightening techniques for the pooling problem},
  author={Jalilian, Mosayeb and Kocuk, Burak},
  journal={Optimization and Engineering},
  pages={1--51},
  year={2025},
  publisher={Springer}
}

@book{lu1991goldberg,
  title={On Goldberg's conjecture: computing the first several Heilbronn numbers},
  author={Lu, Yang and Jingzhong, Zhang and Zhenbing, Zeng},
  year={1991},
  publisher={Universit{\"a}t Bielefeld. SFB 343. Diskrete Strukturen in der Mathematik}
}

@incollection{dress1995heilbronn,
  title={Heilbronn problem for six points in a planar convex body},
  author={Dress, Andreas W. M. and Yang, Lu and Zeng, Zhenbing},
  booktitle={Minimax and Applications},
  pages={173--190},
  year={1995},
 address   = {Boston, MA, USA} ,
  publisher={Springer}
}

@article{goldberg1972maximizing,
  title={Maximizing the smallest triangle made by N points in a square},
  author={Goldberg, Michael},
  journal={Mathematics Magazine},
  volume={45},
  number={3},
  pages={135--144},
  year={1972},
  publisher={Taylor \& Francis}
}

@article{jiang2002average,
  title={The average-case area of Heilbronn-type triangles},
  author={Jiang, Tao and Li, Ming and Vit{\'a}nyi, Paul},
  journal={Random Structures \& Algorithms},
  volume={20},
  number={2},
  pages={206--219},
  year={2002},
  publisher={Wiley Online Library}
}

@inproceedings{jiang1999expected,
  title={The expected size of Heilbronn's triangles},
  author={Jiang, Tao and Li, Ming and Vit{\'a}nyi, Paul},
  booktitle={Proceedings. Fourteenth Annual IEEE Conference on Computational Complexity (Formerly: Structure in Complexity Theory Conference)(Cat. No. 99CB36317)},
  pages={105--113},
  year={1999},
  organization={IEEE}
}

@article{bertram2000algorithm,
  title={An algorithm for Heilbronn's problem},
  author={Bertram--Kretzberg, Claudia and Hofmeister, Thomas and Lefmann, Hanno},
  journal={SIAM Journal on Computing},
  volume={30},
  number={2},
  pages={383--390},
  year={2000},
  publisher={SIAM}
}

@article{komlos1982lower,
  title={A lower bound for Heilbronn's problem},
  author={Koml{\'o}s, J{\'a}nos and Pintz, J{\'a}nos and Szemer{\'e}di, Endre},
  journal={Journal of the London Mathematical Society},
  volume={2},
  number={1},
  pages={13--24},
  year={1982},
  publisher={Oxford University Press}
}

@article{komlos1981heilbronn,
  title={On Heilbronn's triangle problem},
  author={Koml{\'o}s, J{\'a}nos and Pintz, J{\'a}nos and Szemer{\'e}di, Endre},
  journal={Journal of the London Mathematical Society},
  volume={2},
  number={3},
  pages={385--396},
  year={1981},
  publisher={Oxford University Press}
}

@article{roth1976developments,
  title={Developments in Heilbronn's triangle problem},
  author={Roth, Klaus F.},
  journal={Advances in Mathematics},
  volume={22},
  number={3},
  pages={364--385},
  year={1976},
  publisher={Elsevier}
}

@article{roth19722problem,
  title={On a problem of Heilbronn, III},
  author={Roth, Klaus F.},
  journal={Proceedings of the London Mathematical Society},
  volume={3},
  number={3},
  pages={543--549},
  year={1972},
  publisher={Oxford University Press}
}

@article{roth1972problem,
  title={On a problem of Heilbronn, II},
  author={Roth, Klaus F.},
  journal={Proceedings of the London Mathematical Society},
  volume={3},
  number={2},
  pages={193--212},
  year={1972},
  publisher={Oxford University Press}
}

@article{roth1951problem,
  title={On a problem of Heilbronn},
  author={Roth, Klaus F.},
  journal={Journal of the London Mathematical Society},
  volume={1},
  number={3},
  pages={198--204},
  year={1951},
  publisher={Wiley Online Library}
}

@article{dey2020convexifications,
  title={Convexifications of rank-one-based substructures in QCQPs and applications to the pooling problem},
  author={Dey, Santanu S. and Kocuk, Burak and Santana, Asteroide},
  journal={Journal of Global Optimization},
  volume={77},
  number={2},
  pages={227--272},
  year={2020},
  publisher={Springer}
}

@article{dey2015analysis,
  title={Analysis of MILP techniques for the pooling problem},
  author={Dey, Santanu S. and Gupte, Akshay},
  journal={Operations Research},
  volume={63},
  number={2},
  pages={412--427},
  year={2015},
  publisher={INFORMS}
}

@inbook{MaxNumber_Litvinchev2015_general,
author = {Litvinchev, I. and Infante, L. and Ozuna Espinosa, E.},
year = {2015},
month = {09},
pages = {117-135},
title = {Approximate Packing: Integer Programming Models, Valid Inequalities and Nesting},
volume = {105},
publisher = {Springer},
isbn = {978-3-319-18898-0},
doi = {10.1007/978-3-319-18899-7_9}
}

\begin{appendices}

\section{Coordinates for $n=6$ to $n=10$}\label{Appendix}

All coordinates lie in the unit square $[0,1]^2$. For each $n$, the table lists the point coordinates
$\{(x_i,y_i)\}_{i=1}^n$. 
For each configuration, we also report the value \(H_n\), computed directly from those coordinates as the minimum area among all \(\binom{n}{3}\) triangles determined by the points. In other words, \(H_n\) is the smallest triangle area achieved by the explicit placement shown in the table.

\setlength{\tabcolsep}{3pt}
\begin{table}[h]
\centering
\scriptsize
\begin{tabular}{|c|cc|cc|cc|cc|cc|}
\hline
 & \multicolumn{2}{c|}{$n=6$ (0.1249999)} & \multicolumn{2}{c|}{$n=7$ (0.0838584)} & \multicolumn{2}{c|}{$n=8$ (0.0723758)} & \multicolumn{2}{c|}{$n=9$ (0.0548757)} & \multicolumn{2}{c|}{$n=10$ (0.0465369)} \\
\hline
\textbf{$i$}
  & \textbf{$x_i$} & \textbf{$y_i$}
  & \textbf{$x_i$} & \textbf{$y_i$}
  & \textbf{$x_i$} & \textbf{$y_i$}
  & \textbf{$x_i$} & \textbf{$y_i$}
  & \textbf{$x_i$} & \textbf{$y_i$} \\
\hline
1  & 0.0000000 & 0.5002079 & 0.0000000 & 0.0000000 & 0.0000000 & 0.0000000 & 0.1734434 & 0.0000000 & 0.1576891 & 0.0000000 \\
\hline
2  & 0.9583795 & 0.0000095 & 0.8191741 & 0.0000000 & 0.8114203 & 0.0000000 & 0.8062261 & 0.0000000 & 0.7479323 & 0.0000000 \\
\hline
3  & 1.0000000 & 0.5002258 & 0.4161417 & 0.2872579 & 1.0000000 & 0.2324082 & 1.0000000 & 0.1734440 & 0.0000000 & 0.1576884 \\
\hline
4  & 0.4585968 & 0.0000000 & 1.0000000 & 0.2872583 & 0.3771617 & 0.2324082 & 0.0000000 & 0.1734439 & 1.0000000 & 0.2520800 \\
\hline
5  & 0.0415842 & 0.9999909 & 0.5074140 & 0.8060633 & 0.0000000 & 0.7675903 & 0.6531127 & 0.6531128 & 0.6846219 & 0.3153852 \\
\hline
6  & 0.5418013 & 1.0000000 & 0.8648099 & 1.0000000 & 0.6228391 & 0.7675912 & 1.0000000 & 0.7398341 & 0.3153851 & 0.6846210 \\
\hline
7  & -- & -- & 0.0000000 & 1.0000000 & 0.1885812 & 1.0000000 & 0.0000000 & 0.8062269 & 0.0000000 & 0.7479305 \\
\hline
8  & -- & -- & -- & -- & 1.0000000 & 1.0000000 & 0.1734429 & 1.0000000 & 1.0000000 & 0.8423081 \\
\hline
9  & -- & -- & -- & -- & -- & -- & 0.7398355 & 1.0000000 & 0.8423085 & 1.0000000 \\
\hline
10 & -- & -- & -- & -- & -- & -- & -- & -- & 0.2520782 & 1.0000000 \\
\hline
\end{tabular}
\caption{The exact coordinates of the point placements and the corresponding minimum  area from Figure~\ref{fig:n6_to_n10_placements} for different $n$ values.}
\label{tab:n6_to_n10_placements}
\end{table}

\end{appendices}

\end{document}